\def \Sum {{\sf{SUM}}}
\def \pos {{\sf PoS}}
\def \poa {{\sf PoA}}
\def\RP{{\mathbb{R}_{\geq 0}}}
\def\N{{\mathbb{N}}}
\def\R{{\mathbb{R}}}
\begin{document}

\title{On Linear Congestion Games with\\ Altruistic Social Context}
\author{Vittorio Bil\`o}
\institute{ Dipartimento di Matematica e Fisica ``Ennio De Giorgi", Universit\`a del Salento\\ Provinciale Lecce-Arnesano, P.O. Box
193, 73100 Lecce - Italy\\ \textsf{vittorio.bilo@unisalento.it}}

\maketitle

\begin{abstract}
We study the issues of existence and inefficiency of pure Nash equilibria in linear congestion games with altruistic social context, in the spirit of the model recently proposed by de Keijzer {\em et al.} \cite{DSAB13}. In such a framework, given a real matrix $\Gamma=(\gamma_{ij})$ specifying a particular social context, each player $i$ aims at optimizing a linear combination of the payoffs of all the players in the game, where, for each player $j$, the multiplicative coefficient is given by the value $\gamma_{ij}$. We give a broad characterization of the social contexts for which pure Nash equilibria are always guaranteed to exist and provide tight or almost tight bounds on their prices of anarchy and stability. In some of the considered cases, our achievements either improve or extend results previously known in the literature.
\end{abstract}

\section{Introduction}

{\em Congestion games} are, perhaps, the most famous class of non-cooperative games due to their capability to model several interesting competitive scenarios, while maintaining some nice properties. In these games there is a set of players sharing a set of {\em resources}, where each resource has an associated {\em latency function} which depends on the number of players using it (the so-called {\em congestion}). Each player has an available set of strategies, where each strategy is a non-empty subset of resources, and aims at choosing a strategy minimizing her cost which is defined as the sum of the latencies experienced on all the selected resources.

Congestion games have been introduced by Rosenthal~\cite{R73}. He proved that each such a game admits a bounded {\em potential function} whose set of local minima coincides with the set of {\em pure Nash equilibria} of the game, that is, strategy profiles in which no player can decrease her cost by unilaterally changing her strategic choice. This existence result makes congestion games particularly appealing especially in all those applications in which pure Nash equilibria   are elected as the ideal solution concept.

In these contexts, the study of the inefficiency of pure Nash equilibria, usually measured by the sum of the costs experienced by all players, has affirmed as a fervent research direction. To this aim, the notions of {\em price of anarchy} (Koutsoupias and Papadimitriou~\cite{KP99}) and {\em price of stability} (Anshelevich {\em et al.}~\cite{ADKTWR04}) are widely adopted. The price of anarchy (resp. stability) compares the performance of the worst (resp. best) pure Nash equilibrium with that of an optimal cooperative solution.

Congestion games with unrestricted latency functions are general enough to model the Prisoner's Dilemma game, whose unique pure Nash equilibrium is known to perform arbitrarily bad with respect to the solution in which all players cooperate. Hence, in order to deal with significative bounds on the prices of anarchy and stability, some kind of regularity needs to be imposed on the latency functions associated with the resources. To this aim, lot of research attention has been devoted to the case of polynomial latency functions.

In particular, Awerbuch {\em et al.}~\cite{AAE05} and Christodoulou and Koutsoupias~\cite{CK05} proved that the price of anarchy of congestion
games is $5/2$ for linear latency functions and $d^{\Theta(d)}$ for polynomial latency functions of degree $d$. Subsequently, Aland {\em et al.} \cite{ADGMS06} obtained exact bounds on the price of anarchy for congestion games with polynomial latency functions. Still for linear latencies, Caragiannis {\em et al.} \cite{CFKKM06} proved that the same bounds hold for {\em load balancing} games as well, that is, for the restriction in which all possible strategies are singleton sets, while for {\em symmetric} load balancing games, that is load balancing games in which the players share the same set of strategies, L\"{u}cking {\em et al.} \cite{LMMR08} proved that the price of anarchy is $4/3$. Moreover, the works of Caragiannis {\em et al.}~\cite{CFKKM06} and Christodoulou and Koutsoupias~\cite{CK05b} show that the price of stability of congestion games with linear latency functions is $1+1/\sqrt{3}$, while an exact characterization for the case of polynomial latency functions of degree $d$ has been recently given by Christodoulou and Gairing \cite{CG13}.

\

\noindent{\bf Motivations and Previous Related Works.} To the best of our knowledge, Chen and Kempe \cite{CK08} were the first to study the effects of altruistic (and spiteful) behavior on the existence and inefficiency of pure Nash equilibria in some well-understood non-cooperative games. They focus on the class of non-atomic congestion games, where there are infinitely many players each contributing for a negligible amount of congestion, and show that price of anarchy decreases as the degree of altruism of the players increases.

Hoefer and Skopalik \cite{HS09} consider (atomic) linear congestion games with $\gamma_i$-altruistic players, where $\gamma_i\in [0,1]$, for each player $i$. According to their model, player $i$ aims at minimizing a function defined as $1-\gamma_i$ times her cost plus $\gamma_i$ times the sum of the costs of all the players in the game ({\em also counting} player $i$). They show that pure Nash equilibria are always guaranteed to exist via a potential function argument, while, in all the other cases in which existence is not guaranteed, they study the complexity of the problem of deciding whether a pure Nash equilibrium exists in a given game.

Given the existential result by Hoefer and Skopalik \cite{HS09}, Caragiannis {\em et al.}~\cite{CKKKP10} focus on the impact of altruism on the inefficiency of pure Nash equilibria in linear congestion games with altruistic players. However, they consider a more general model of altruistic behavior: in fact, for a parameter $\gamma_i\in [0,1]$, they model a $\gamma_i$-altruistic player $i$ as a player who aims at minimizing a function defined as $1-\gamma_i$ times her cost plus $\gamma_i$ times the sum of the costs of all the players in the game {\em other than} $i$\footnote{Note that each game with $\gamma_i$-altruistic players, where $\gamma_i\in [0,1]$, in the model of Hoefer and Skopalik \cite{HS09} maps to a game with $\gamma'_i$-altruistic players, where $\gamma'_i\in [0,1/2]$, in the model of Caragiannis {\em et al.} \cite{CKKKP10}.}. In such a way, the more $\gamma_i$ increases, the more $\gamma_i$-altruistic players tend to favor the interests of the others to their own ones, with $1$-altruistic and $0$-altruistic players being the two opposite extremal situations in which players behave in a completely altruistic or in a completely selfish way, respectively.
Caragiannis {\em et al.} \cite{CKKKP10} consider the basic case of $\gamma_i=\gamma$ for each player $i$ and show that the price of anarchy is $\frac{5-\gamma}{2-\gamma}$ for $\gamma\in [0,1/2]$ and $\frac{2-\gamma}{1-\gamma}$ for $\gamma\in [1/2,1]$ and that these bounds hold also for load balancing games. This result appears quite surprising, because it shows that altruism can only have a harmful effect on the efficiency of linear congestion games, since the price of anarchy increases from $5/2$ up to an unbounded value as the degree of altruism goes from $0$ to $1$.
On the positive side, they prove that, for the special case of symmetric load balancing games, the price of anarchy is $\frac{4(1-\gamma)}{3-2\gamma}$ for $\gamma\in [0,1/2]$ and $\frac{3-2\gamma}{4(1-\gamma)}$ for $\gamma\in [1/2,1]$, which shows that altruism has a beneficial effect as long as $\gamma\in [0,0.7]$. Note that, that for $\gamma=1/2$, that is when selfishness and altruism are perfectly balanced, the price of anarchy drops to $1$ (i.e., all pure Nash equilibria correspond to socially optimal solutions), while, as soon as $\gamma$ approaches $1$, the price of anarchy again grows up to an unbounded value.

Recently, de Keijzer {\em et al.} \cite{DSAB13} proposed a model for altruistic and spiteful behavior further generalizing the one of Caragiannis {\em et al.} \cite{CKKKP10}. According to their definition, each non-cooperative game with $n$ players is coupled with a real matrix $\Gamma=(\gamma_{ij})\in\R^{n\times n}$, where $\gamma_{ij}$ expresses how much player $i$ cares about player $j$. In such a framework, player $i$ wants to minimize the sum, for each player $j$ in the game (thus also counting $i$), of the cost of player $j$ multiplied by $\gamma_{ij}$. Thus, a positive (resp. negative) value $\gamma_{ij}$ expresses an altruistic (resp. spiteful) attitude of player $i$ towards player $j$. When considering linear congestion games with altruistic players, along the lines of the negative results of Caragiannis {\em et al.} \cite{CKKKP10}, as soon as there are two players $i,j$ such that $\gamma_{ij}>\gamma_{ii}$, i.e., player $i$ cares more about player $j$ than about herself, the price of anarchy becomes unbounded. Therefore, Keijzer {\em et al.} \cite{DSAB13} focus on the scenario, which they call {\em restricted altruistic social context}, in which $\gamma_{ii}\geq\gamma_{ij}$ for each pair of players $i$ and $j$. By extending the smoothness framework of Roughgarden \cite{R12}, they show an upper bound of $7$ on the price of anarchy of coarse correlated equilibria, which implies the same upper bound also on the price of anarchy of correlated equilibria, mixed Nash equilibria and pure Nash equilibria (whenever the latter exist). Moreover, they prove that, when restricting to load balancing games with identical resources, such an upper bound decreases to $2+\sqrt{5}\approx 4.236$.

Noting that matrix $\Gamma$ implicitly represents the {\em social context} (for instance, a social network) in which the players operate, the model of de Keijzer {\em et al.} \cite{DSAB13} falls within the scope of the so-called {\em social context games}. In these games, the payoff of each player is redefined as a function, called {\em aggregating function}, of her cost and of those of her neighbors in a given {\em social context graph}.

Social context games have been introduced and studied by Ashlagi, Krysta, and Tennenholtz \cite{AKT08} for the class of load balancing games, in the case in which the aggregating function is one among the minimum, maximum, sum and ranking functions, for which they gave an almost complete characterization of the cases in which existence of pure Nash equilibria is guaranteed. The model of de Keijzer {\em et al.} \cite{DSAB13}, hence, coincides with a social context game in which the social context graph has weighted edges and the aggregating function is a weighted sum.
The issues of existence and inefficiency of pure Nash equilibria for the case of social context linear congestion games have been considered by Bil\`o {\em et al.}~\cite{BCFG13}. In particular, for the aggregating function sum, pure Nash equilibria are shown to exist for each social context graph via an exact potential function argument and the price of anarchy is shown to fall within the interval $[5;17/3]$.

Finally, the particular case of social context games in which the social context graph is a partition into cliques coincide with games in which static coalitions among players are allowed. These games have been considered by Fotakis, Kontogiannis and Spirakis \cite{FKS06} who focus on weighted congestion game defined on a parallel link graph when the aggregating function is the maximum function (i.e, the coalitional generalization of the KP-model of Koutsoupias and Papadimitriou \cite{KP99}). Among their findings, they show that such games always admit a potential function which becomes an exact one in case of linear latency functions (even in the generalization to networks) and that the price of anarchy is $\Theta\left(\min\left\{k,\frac{\log m}{\log\log m}\right\}\right)$, where $m$ denotes the number of links and $k$ denotes the number of coalitions.

\

\noindent{\bf Our Contribution.} We consider the issues of existence and inefficiency of pure Nash equilibria in linear congestion games with social context as defined by de Keijzer {\em et al.} \cite{DSAB13}. In particular, we restrict our attention to the case of altruistic players, that is, the case in which the matrix $\Gamma$ has only non-negative entries. We show that pure Nash equilibria are always guaranteed to exist via an exact potential function argument, when either the altruistic social context is restricted and $\Gamma$ is symmetric. Moreover, we provide instances with three players not admitting pure Nash equilibria as soon as exactly one of these two properties is not satisfied.

We then prove that, in the restricted altruistic social context, the price of anarchy is exactly $17/3$. Such a characterization is achieved by providing an upper bound of $17/3$ which holds for any matrix $\Gamma$ (even the ones for which pure Nash equilibria are not guaranteed to exist) and a matching lower bound which holds even in the special case in which $\Gamma$ is a boolean symmetric matrix and the game is a load balancing one. Such a result has two interesting interpretations: first, it shows that either the upper bound of $7$ given by de Keijzer {\em et al.} \cite{DSAB13} for the price of anarchy of coarse correlated equilibria is not tight, or that the prices of anarchy of coarse correlated equilibria and pure Nash equilibria are different (the latter hypothesis would be an interesting one, since this situation does not happen in linear congestion games with selfish players); secondly, it proves that the assumption of having identical resources is essential in the upper bound of $2+\sqrt{5}$ given by de Keijzer {\em et al.} \cite{DSAB13} for the case of load balancing games.

For the price of stability in the restricted altruistic social context, we give an upper bound of $2$ holding for each symmetric matrix $\Gamma$ and a lower bound of $1+1/\sqrt{2}\approx 1.707$ holding for the case in which $\Gamma$ is a boolean symmetric matrix.

Finally, we also consider the special case in which $\Gamma$ is such that $\gamma_{ij}=\gamma_i$ for each pair of indexes $i,j$ with $i\neq j$, which coincides with the general model of $\gamma_i$-altruistic players of Caragiannis {\em et al.} \cite{CKKKP10}. We show that pure Nash equilibria are always guaranteed to exist in any case via an exact potential function argument (this slightly improves the existential result by Hoefer and Skopalik \cite{HS09} since they only proved the existence of a weighted potential function) and give an upper bound on the price of anarchy in the general case and an exact bound on the price of stability when $\gamma_i=\gamma$ for each player $i$.

%

\section{Preliminaries}
A congestion game is a tuple ${\cal G}=\langle [n],E,S_{i\in [n]},\ell_{e\in E}\rangle$, where $[n]:=\{1,\ldots,n\}$ is a set of $n\geq 2$ players, $E$ is a set of resources, $\emptyset\neq S_i\subseteq 2^E$ is the set of strategies of player $i$, and $\ell_e:\N\rightarrow\RP$ is the latency function of resource $e$. The special case in which, for each $i\in [n]$ and each $s\in S_i$, it holds $|s|=1$ is called load balancing congestion game. Denoted by ${\cal S}:=\times_{i\in [n]}S_i$ the set of strategy profiles in $\cal G$, that is, the set of outcomes of $\cal G$ in which each player selects a single strategy, the cost of player $i$ in the strategy profile $S=(s_1,\ldots,s_n)\in{\cal S}$ is defined as $c_i(S)=\sum_{e\in s_i}\ell_e(n_e(S))$, where $n_e(S):=|\{j\in [n]:e\in s_j\}|$ is the congestion of resource $e$ in $S$, that is, the number of players using $e$ in $S$.

Given a strategy profile $S=(s_1,\ldots,s_n)$ and a strategy $t\in S_i$ for a player $i\in [n]$, we denote with $S_{-i}\diamond t$ the strategy profile obtained from $S$ by replacing the strategy played by $i$ in $S$ with $t$. A pure Nash equilibrium is a strategy profile $S$ such that, for any player $i\in [n]$ and for any strategy $t\in S_i$, it holds $c_i(S_{-i}\diamond t)\geq c_i(S)$.

The function $\Sum:{\cal S}\rightarrow\RP$ such that $\Sum(S)=\sum_{i\in [n]}c_i(S)$, called the social function, measures the social welfare of a game. Given a congestion game $\cal G$, let ${\cal NE}({\cal G})$ denote the set of its pure Nash equilibria (such a set has been shown to be non-empty by Rosenthal \cite{R73}) and $S^*$ be the strategy profile minimizing the social function. The price of anarchy (\poa) of $\cal G$ is defined as $\max_{S\in{\cal NE(G)}}\left\{\frac{\Sum(S)}{\Sum(S^*)}\right\}$, while the price of stability (\pos) of $\cal G$ is defined as $\min_{S\in{\cal NE(G)}}\left\{\frac{\Sum(S)}{\Sum(S^*)}\right\}$.

A linear congestion game is a congestion game such that, for each $e\in E$, it holds $\ell_e(x)=\alpha_e x+\beta_e$, with $\alpha_e,\beta_e\geq 0$. For these games, the cost of player $i$ in the strategy profile $S=(s_1,\ldots,s_n)$ becomes $c_i(S)=\sum_{e\in s_i}\left(\alpha_e n_e(S)+\beta_e\right)$, while the social value of $S$ becomes $\Sum(S)=\sum_{i\in [n]}\sum_{e\in s_i}\left(\alpha_e n_e(S)+\beta_e\right)=\sum_{e\in E}\left(\alpha_e n_e(S)^2+\beta_e n_e(S)\right)$.

A linear congestion game with an altruistic social context is a pair $({\cal G},\Gamma)$ such that $\cal G$ is a linear congestion game with $n$ players and $\Gamma=(\gamma_{ij})\in\R^{n\times n}$ is a real matrix such that $\gamma_{ij}\geq 0$ for each $i,j\in [n]$. The set of players and strategies is defined as in the underlying linear congestion game $\cal G$, while, for any strategy profile $S$, the cost of player $i$ is $S=(s_1,\ldots,s_n)$ is defined as
$\widehat{c}_i(S)=\sum_{j\in [n]}\left(\gamma_{ij}\cdot c_j(S)\right)=\sum_{j\in [n]}\left(\gamma_{ij}\left(\alpha_e n_e(S)+\beta_e\right)\right)=\sum_{e\in E}\left(\left(\alpha_e n_e(S)+\beta_e\right)\sum_{j\in [n]:e\in s_j}\gamma_{ij}\right)$,
where $c_j(S)$ is the cost of player $j$ in $S$ in the underlying linear congestion game $\cal G$. The special case in which $\gamma_{ii}\geq\gamma_{ij}$ for each $i,j\in [n]$, is called restricted altruistic social context. Note that, in such a case, as pointed out by de Keijzer {\em et al.} \cite{DSAB13}, it is possible to assume without loss of generality that $\gamma_{ii}=1$ for each $i\in [n]$\footnote{This claim follows from the fact that both the set of pure Nash equilibria and the social value of any strategy profile do not change when dividing all the entries in row $i$ of $\Gamma$ by the value $\gamma_{ii}$.}.

\section{Existence of Pure Nash Equilibria}\label{sec-existence}
In this section, we provide a complete characterization of the social contexts for which pure Nash equilibria are guaranteed to exist, independently of which is the underlying linear congestion game.

For a strategy profile $S=(s_1,\ldots,s_n)$, a player $i\in [n]$ and a strategy $t\in S_i$, for the sake of brevity, let us denote with $x_e:=n_e(S)$ and with $x'_e=n_e(S_{-i}\diamond t)$. It holds

\begin{equation}\label{eq-deviation}
\begin{split}
& \widehat{c}_i(S)-\widehat{c}_i(S_{-i}\diamond t)\\
= & \displaystyle\sum_{e\in E}\left(\left(\alpha_e x_e+\beta_e\right)\sum_{j:e\in s_j}\gamma_{ij}\right)-\sum_{e\in E}\left(\left(\alpha_e x'_e+\beta_e\right)\sum_{j:e\in s_j}\gamma_{ij}\right)\\
= & \displaystyle\sum_{e\in s_i\setminus t}\left(\left(\alpha_e x_e+\beta_e\right)\sum_{j:e\in s_j}\gamma_{ij}-\left(\alpha_e (x_e-1)+\beta_e\right)\sum_{j\neq i:e\in s_j}\gamma_{ij}\right)\\
& +\displaystyle\sum_{e\in t\setminus s_i}\left(\left(\alpha_e x_e+\beta_e\right)\sum_{j:e\in s_j}\gamma_{ij}-\left(\alpha_e (x_e+1)+\beta_e\right)\left(\gamma_{ii}+\sum_{j:e\in s_j}\gamma_{ij}\right)\right)\\
= & \displaystyle\sum_{e\in s_i\setminus t}\left(\gamma_{ii}\left(\alpha_e x_e+\beta_e\right)+\alpha_e\sum_{j\neq i:e\in s_j}\gamma_{ij}\right)\\
& -\displaystyle\sum_{e\in t\setminus s_i}\left(\gamma_{ii}\left(\alpha_e (x_e+1)+\beta_e\right)+\alpha_e\sum_{j:e\in s_j}\gamma_{ij}\right).
\end{split}
\end{equation}

On the positive side, we show that, for restricted altruistic social contexts such that $\Gamma$ is symmetric, pure Nash equilibria do always exist.

\begin{theorem}\label{existence}
Each linear congestion game with restricted altruistic social context $({\cal G},\Gamma)$ such that $\Gamma$ is symmetric admits an exact potential function.
\end{theorem}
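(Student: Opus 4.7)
The plan is to exhibit an exact potential for the game by starting from Rosenthal's classical construction and adding a correction term that encodes the altruistic cross-interactions. As observed by de Keijzer et al.~\cite{DSAB13} and recalled in Section~2, in a restricted altruistic social context one may assume without loss of generality that $\gamma_{ii}=1$ for every $i\in[n]$; combined with the symmetry $\gamma_{ij}=\gamma_{ji}$, this suggests defining
\[
\Phi(S)\;=\;\sum_{e\in E}\left(\alpha_e\,\frac{n_e(S)(n_e(S)+1)}{2}+\beta_e\, n_e(S)\right)+\sum_{e\in E}\alpha_e\!\!\sum_{\substack{i<j\\ e\in s_i\cap s_j}}\!\gamma_{ij}.
\]
The first sum is Rosenthal's potential $\Phi_R$, which will match the $\gamma_{ii}$-terms in equation~(\ref{eq-deviation}); the second sum, call it $\Psi(S)$, assigns weight $\alpha_e \gamma_{ij}$ to each unordered pair of players jointly using resource $e$, and is well-posed precisely because $\Gamma$ is symmetric.

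Next I would verify that $\Phi(S)-\Phi(S_{-i}\diamond t)=\widehat{c}_i(S)-\widehat{c}_i(S_{-i}\diamond t)$ by matching the two corrections term by term against equation~(\ref{eq-deviation}). The standard Rosenthal computation immediately gives
\[
\Phi_R(S)-\Phi_R(S_{-i}\diamond t)=\sum_{e\in s_i\setminus t}(\alpha_e x_e+\beta_e)-\sum_{e\in t\setminus s_i}(\alpha_e(x_e+1)+\beta_e),
\]
which is exactly the $\gamma_{ii}$-part of equation~(\ref{eq-deviation}) once $\gamma_{ii}=1$ is substituted. For the remaining $\Psi$-part, the deviation of player $i$ affects only the pairs $\{i,j\}$ with $j\neq i$: on each edge $e\in s_i\setminus t$ the pair-sum loses the contribution $\alpha_e\sum_{j\neq i:\,e\in s_j}\gamma_{ij}$, while on each edge $e\in t\setminus s_i$ it gains the contribution $\alpha_e\sum_{j\neq i:\,e\in s_j}\gamma_{ij}$. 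Noting that for $e\in t\setminus s_i$ we have $\sum_{j:\,e\in s_j}\gamma_{ij}=\sum_{j\neq i:\,e\in s_j}\gamma_{ij}$, these two quantities coincide with the remaining $\alpha_e$-terms in equation~(\ref{eq-deviation}).

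The proof is essentially careful bookkeeping, and the only real subtlety is the role of symmetry: writing $\Psi$ as a sum over unordered pairs forces the same coefficient to be used for the ordered pair $(i,j)$ and its transpose $(j,i)$, so if $\gamma_{ij}\neq\gamma_{ji}$ the change induced in $\Psi$ by a deviation of $i$ could not be read off from $\widehat{c}_i$ alone. Likewise, the restriction $\gamma_{ii}\geq\gamma_{ij}$ enters only through the normalization $\gamma_{ii}=1$, which is what aligns the selfish part of the deviation with $\Phi_R$. This explains, in retrospect, why dropping either hypothesis breaks the construction and leaves room for the counterexamples announced in the introduction.
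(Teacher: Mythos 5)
Your proposal is correct and is essentially the paper's own proof: your $\Phi_R+\Psi$ is exactly the paper's potential $\frac 1 2\sum_{e\in E}\bigl(\alpha_e\bigl(n_e(S)(n_e(S)+1)+\sum_{(i,j)\in P_e(S)}\gamma_{ij}\bigr)+2\beta_e n_e(S)\bigr)$ rewritten with the ordered-pair sum collapsed to unordered pairs via symmetry, and the verification against equation~(\ref{eq-deviation}) (using the normalization $\gamma_{ii}=1$ and the observation that $\sum_{j:e\in s_j}\gamma_{ij}=\sum_{j\neq i:e\in s_j}\gamma_{ij}$ for $e\in t\setminus s_i$) proceeds exactly as in the paper.
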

\begin{proof}
For a strategy profile $S$ and a resource $e$, let $P_e(S)=\{(i,j)\in [n]\times [n]:i\neq j \wedge e\in s_i\cap s_j\}$. We define the following potential function: $$\Phi(S)=\frac 1 2 \sum_{e\in E}\left(\alpha_e\left(n_e(S)(n_e(S)+1)+\sum_{(i,j)\in P_e(S)}\gamma_{ij}\right)+2\beta_e n_e(S)\right).$$ Consider a strategy profile $S=(s_1,\ldots,s_n)$, a player $i\in [n]$ and a strategy $t\in S_i$ and again denote with $x_e:=n_e(S)$. For the case in which $\gamma_{ii}=1$ for each $i\in [n]$ and $\gamma_{ij}=\gamma_{ji}$ for each $i,j\in [n]$, it holds
\begin{displaymath}
\begin{array}{cl}
& \Phi(S)-\Phi(S_{-i}\diamond t)\\
= & \displaystyle\frac 1 2 \sum_{e\in s_i\setminus t}\left(\alpha_e\left(x_e(x_e+1)+\sum_{(i,j)\in P_e(S)}\gamma_{ij}\right)+2\beta_e x_e\right)\\
& -\displaystyle\frac 1 2 \sum_{e\in s_i\setminus t}\left(\alpha_e\left(x_e(x_e-1)+\sum_{(j,k)\in P_e(S):j,k\neq i}\gamma_{jk}\right)+2\beta_e(x_e-1)\right)\\
& +\displaystyle\frac 1 2 \sum_{e\in t\setminus s_i}\left(\alpha_e\left(x_e(x_e+1)+\sum_{(i,j)\in P_e(S)}\gamma_{ij}\right)+2\beta_e x_e\right)\\
& -\displaystyle\frac 1 2 \sum_{e\in t\setminus s_i}\left(\alpha_e\left((x_e+1)(x_e+2)+\sum_{(j,k)\in P_e(S)}\gamma_{jk}+2\sum_{j:e\in s_j}\gamma_{ij}\right)+2\beta_e(x_e+1)\right)\\
= & \displaystyle\frac 1 2 \sum_{e\in s_i\setminus t}\left(\alpha_e\left(2x_e+2\sum_{j\neq i:e\in s_j}\gamma_{ij}\right)+2\beta_e\right)\\
& -\displaystyle\frac 1 2 \sum_{e\in t\setminus s_i}\left(\alpha_e\left(2(x_e+1)+2\sum_{j:e\in s_j}\gamma_{ij}\right)+2\beta_e\right)\\
= & \displaystyle\sum_{e\in s_i\setminus t}\left(\alpha_e\left(x_e+\sum_{j\neq i:e\in s_j}\gamma_{ij}\right)+\beta_e\right)-\sum_{e\in t\setminus s_i}\left(\alpha_e\left(x_e+1+\sum_{j:e\in s_j}\gamma_{ij}\right)+\beta_e\right)\\
\end{array}
\end{displaymath}
which, being equivalent to equation (\ref{eq-deviation}), shows that $\Phi$ is an exact potential function for $({\cal G},\Gamma)$.\qed
\end{proof}

In order to prove that the characterization given in Theorem \ref{existence} is tight, we provide the following two non-existential results. In the first one, although preserving the property that $\Gamma$ is symmetric, we relax the constraint that the game is played in a restricted altruistic social context: in particular, we allow $\gamma_{ii}=0$ for some player $i\in [n]$.

\begin{theorem}\label{ne1}
There exists a three-player linear congestion game $\cal G$ and a symmetric matrix $\Gamma\in\R^{3\times 3}$ such that the linear congestion game with altruistic social context $({\cal G},\Gamma)$ does not admit pure Nash equilibria.
\end{theorem}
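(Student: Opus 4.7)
The plan is to exhibit a counterexample: a small three-player linear congestion game $\cal G$ together with a symmetric non-negative matrix $\Gamma\in\R^{3\times 3}$ having $\gamma_{ii}=0$ for at least one $i$ (so the restricted condition of Theorem \ref{existence} fails), and then to rule out pure Nash equilibria by exhaustive case analysis on the (small) set of strategy profiles.

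The first step is to fix a minimal combinatorial setup: three players with two strategies each, and at most four resources with simple linear latencies $\ell_e(x)=\alpha_e x+\beta_e$. This bounds the number of pure strategy profiles to $2^3=8$, so exhaustive enumeration is feasible. The second step is to choose $\Gamma$ symmetric, non-negative, with $\gamma_{ii}=0$ for at least one player $i$; setting $\gamma_{ii}=0$ turns player $i$ into a \emph{pure altruist} who completely disregards her own cost and selects her strategy solely to reduce the weighted cost of her neighbors. This is precisely the ingredient that can destabilize the dynamics, because the direction of player $i$'s best response is dictated by the other players' configurations, which in turn depend on $i$'s presence.

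The third step is the verification itself. For each of the eight profiles, I would apply equation (\ref{eq-deviation}) with the chosen entries of $\Gamma$ to compute, for every player and every alternative strategy, the quantity $\widehat{c}_i(S)-\widehat{c}_i(S_{-i}\diamond t)$. A profile $S$ is a pure Nash equilibrium if and only if all of these quantities are non-negative; to exclude $S$ from being an equilibrium it is enough to exhibit a single pair $(i,t)$ making the expression strictly positive. The goal is to pick the $\alpha_e,\beta_e$ and the off-diagonal entries of $\Gamma$ so that in every profile some player's preferred deviation is strict, thereby producing a cycle in the best-response graph that visits all profiles.

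The main obstacle is tuning the parameters so that the dynamics truly cycles. Most naive symmetric choices (equal latencies, uniform strategy sets) admit a ``spread-out'' balanced profile in which any deviation increases congestion on some shared resource by more than it decreases it elsewhere, so that no player strictly prefers to move; such a profile is an equilibrium, killing the construction. Breaking this symmetry is therefore essential: either the three strategy sets must be arranged so that each pair of players shares a different resource (a cyclic overlap pattern), or the latency coefficients $\alpha_e,\beta_e$ must be chosen asymmetrically, so that the altruistic player's preferred deviation forces a chain reaction among the two (partially) selfish players. Once a candidate configuration is guessed, the verification itself is routine thanks to the small number of profiles and to equation (\ref{eq-deviation}).
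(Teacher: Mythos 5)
There is a genuine gap: your text is a plan for a proof, not a proof. The statement is purely existential, and the only way it is established (both in the paper and in your own outline) is by exhibiting a concrete instance and checking all eight profiles; you never specify the strategy sets, the latency coefficients $\alpha_e,\beta_e$, or the entries of $\Gamma$, and consequently no verification is or can be carried out. You yourself identify the crux --- ``the main obstacle is tuning the parameters so that the dynamics truly cycles'' --- but you leave exactly that obstacle unresolved, so nothing beyond the (correct) choice of approach has been demonstrated. Until a specific $({\cal G},\Gamma)$ is written down and each of the eight profiles is shown to admit a strictly improving deviation, the theorem is not proved.

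A secondary concern is your claim that a ``minimal'' setup with at most four resources and simple latencies should suffice. The paper's instance uses three players with two strategies each, but $13$ resources and rather delicately chosen coefficients (e.g.\ $\alpha_{11}=1047$), together with a symmetric $\Gamma$ having $\gamma_{22}=0$ and small fractional off-diagonal entries such as $\gamma_{12}=\frac{10}{211}$; the eight profiles are then checked in a table exhibiting a best-response cycle. It may well be that a smaller instance exists, but that is itself a nontrivial claim requiring either a construction or a search, and asserting feasibility of the small format without producing it compounds the gap rather than closing it. The overall strategy you describe (symmetric $\Gamma$, $\gamma_{ii}=0$ for one player to break the restricted condition, exhaustive check of the $2^3$ profiles via equation (\ref{eq-deviation})) does match the paper's; what is missing is the entire substance.
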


In the second result, although preserving the property that the game is played in a restricted altruistic social context, we relax the constraint that $\Gamma$ is symmetric.

\begin{theorem}\label{ne2}
There exists a three-player linear congestion game $\cal G$ and a matrix $\Gamma\in\R^{3\times 3}$ with a unitary main diagonal such that $({\cal G},\Gamma)$ does not admit pure Nash equilibria.
\end{theorem}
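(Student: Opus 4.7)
The plan is to mimic the construction used in the proof of Theorem~\ref{ne1}: exhibit an explicit three-player linear congestion game $\cal G$ in which each player has exactly two strategies, together with a matrix $\Gamma$ satisfying $\gamma_{ii}=1$ for all $i\in[3]$ (so that the altruistic social context is restricted) but with some off-diagonal entries violating $\gamma_{ij}=\gamma_{ji}$, such that each of the $2^3=8$ strategy profiles admits a profitable unilateral deviation under the altruistic cost $\widehat{c}$. Arranging the eight deviations in a cyclic fashion then certifies the absence of pure Nash equilibria.

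Concretely, I would fix a target best-response cycle visiting all eight profiles and, for each transition, apply equation~(\ref{eq-deviation}) to obtain a strict improvement inequality that is linear in the coefficients $\alpha_e,\beta_e$ and in the six off-diagonal entries $\gamma_{ij}$ (the diagonal ones being pinned to $1$). The eight inequalities form a linear feasibility program subject to $\alpha_e,\beta_e\geq 0$ and $\gamma_{ij}\in[0,1]$. The extra freedom needed to replace the role played by $\gamma_{22}=0$ in Theorem~\ref{ne1} comes precisely from asymmetry: dropping $\gamma_{ij}=\gamma_{ji}$ yields three additional degrees of freedom. I would therefore keep the same resource set and strategy structure as in Theorem~\ref{ne1}, and use the asymmetric entries to over- or under-weight player $2$'s concern for the other players in an unbalanced direction — for instance by setting $\gamma_{21}$ and $\gamma_{23}$ close to $1$ while keeping $\gamma_{12}$ and $\gamma_{32}$ small — so that player $2$'s best responses flip in essentially the same way as they did when $\gamma_{22}=0$, even though her own cost is now fully counted.

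The main obstacle is parameter tuning: the eight strict inequalities must be simultaneously satisfied under the box constraints $\gamma_{ij}\in[0,1]$ and the non-negativity of the latency coefficients. I would attack this by first fixing the combinatorial structure (which resources appear in each strategy) so that the coefficient matrix of the inequalities has the appropriate sign pattern along the chosen cycle, and then scaling the $\alpha_e$ to amplify the terms governing the tightest transitions until the system becomes feasible. Once a valid assignment of $\alpha_e$, $\beta_e$ and $\gamma_{ij}$ is exhibited, the proof is completed by displaying, in the style of the table accompanying Theorem~\ref{ne1}, each of the eight profiles together with its $\widehat{c}$-values and the profitable migration it admits, thereby ruling out pure Nash equilibria by inspection.
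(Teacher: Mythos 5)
Your plan follows the same route as the paper: the paper's proof of Theorem~\ref{ne2} is exactly an explicit three-player instance (nine resources, two strategies per player, $\beta_e=0$, specific values of $\alpha_e$, and $\gamma_{ij}=1$ for all $i,j$ except $\gamma_{21}=\gamma_{32}=0$) together with a table assigning to each of the eight profiles a profitable unilateral deviation under $\widehat{c}$. However, as written your proposal has a genuine gap: for an existence theorem whose entire content is the witness, you never produce the witness, nor do you prove that the system of eight strict improvement inequalities is feasible. That feasibility is precisely the nontrivial point — indeed, by Theorem~\ref{existence} the same system is provably \emph{infeasible} whenever $\Gamma$ is symmetric with unit diagonal, so one cannot wave at ``extra degrees of freedom from asymmetry'' and conclude; the asymmetric entries must be shown to actually make the system solvable, either by exhibiting numbers (as the paper does) or by an explicit feasibility argument. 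A related technical slip: equation~(\ref{eq-deviation}) contains products $\alpha_e\gamma_{ij}$, so the unknowns $(\alpha_e,\gamma_{ij})$ enter \emph{bilinearly}, not linearly; the problem becomes a linear feasibility program only after you fix $\Gamma$ (the paper fixes $\gamma_{21}=\gamma_{32}=0$ and all other entries equal to $1$) and then searches over the $\alpha_e$.

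Two smaller points. First, you need not (and probably should not) insist on a single cycle visiting all eight profiles: the paper's instance uses a six-profile best-response cycle with the two remaining profiles deviating into it, and demanding a Hamiltonian cycle only adds constraints to an already delicate system. Second, your suggestion to reuse the exact resource/strategy structure of Theorem~\ref{ne1} is an unverified guess — the paper in fact uses a different, smaller structure for Theorem~\ref{ne2} — so if you keep that choice you still owe a check that the sign pattern of the resulting inequalities supports a feasible assignment. To close the gap, either reproduce a concrete instance with its table of $\widehat{c}$-values (the paper's can serve as a template) or give a self-contained argument that your feasibility system admits a solution with $\gamma_{ii}=1$, $\gamma_{ij}\in[0,1]$, and $\alpha_e\geq 0$.
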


\section{Inefficiency of Pure Nash Equilibria}\label{sec-bounds}
In this section, we give bounds on the prices of anarchy and stability of linear congestion games with restricted social context. These bounds are achieved by applying the primal-dual technique that we introduced in \cite{B12}. To this aim, we recall that it is possible to assume without loss of generality that $\beta_e=0$ for each $e\in E$ as long as we are not interested in load balancing games. For a given linear congestion game with altruistic social context $({\cal G},\Gamma)$, we denote with $K=(k_1,\ldots,k_n)$ and $O=(o_1,\ldots,o_n)$, respectively, a Nash equilibrium and a social optimum of $({\cal G},\Gamma)$ and we use $K_e:=n_e(K)$ and $O_e:=n_e(O)$ to denote the congestion of resource $e$ in $K$ and $O$, respectively.

The primal-dual method aims at formulating the problem of maximizing the ratio $\frac{{\sf SUM}(K)}{{\sf SUM}(O)}$ via linear programming. The two strategy profiles $K$ and $O$ play the role of fixed constants, while, for each $e\in E$, the values $\alpha_e$ defining the latency functions are variables that must be suitably chosen so as to satisfy two constraints: the first, assures that $K$ is a pure Nash equilibrium, while the second normalizes to $1$ the value of the social optimum ${\sf SUM}(O)$. The objective function aims at maximizing the social value ${\sf SUM}(K)$ which, being the social optimum normalized to $1$, is equivalent to maximize the ratio $\frac{{\sf SUM}(K)}{{\sf SUM}(O)}$. Let us denote with $LP(K,O)$ such a linear program, which, in our scenario of investigation becomes
\begin{displaymath}
\begin{array}{ll}
maximize \displaystyle\sum_{e\in E}\left(\alpha_e K_e^2\right)\\
subject\ to\\
\displaystyle\sum_{e\in k_i\setminus o_i}\left(\alpha_e \left(K_e+\sum_{j\neq i:e\in k_j}\gamma_{ij}\right)\right)\\
\ \ -\displaystyle\sum_{e\in o_i\setminus k_i}\left(\alpha_e\left(K_e+1+\sum_{j:e\in k_j}\gamma_{ij}\right)\right)\leq 0, & \ \ \forall i\in [n]\\
\displaystyle\sum_{e\in E}\left(\alpha_e O_e^2\right) = 1,\\
\alpha_e\geq 0, & \ \ \forall e\in E
\end{array}
\end{displaymath}
Let $DLP(K,O)$ be the dual program of $LP(K,O)$. By the Weak Duality Theorem, each feasible solution to $DLP(K,O)$ provides an upper bound on the optimal solution of $LP(K,O)$. Hence, by providing a feasible dual solution, we obtain an upper bound on the ratio $\frac{{\sf SUM}(K)}{{\sf SUM}(O)}$. Anyway, if the provided dual solution is independent on the particular choice of $K$ and $O$, we obtain an upper bound on the ratio $\frac{{\sf SUM}(K)}{{\sf SUM}(O)}$ for any possible pair of profiles $K$ and $O$, which means that we obtain an upper bound on the price of anarchy of pure Nash equilibria.
The dual program $DLP(K,O)$ is
\begin{displaymath}
\begin{array}{ll}
minimize\ \theta\\
subject\ to\\
\displaystyle\sum_{i:e\in k_i\setminus o_i}\left(y_i \left(K_e+\sum_{j\neq i:e\in k_j}\gamma_{ij}\right)\right)\\
\ \ -\displaystyle\sum_{i:e\in o_i\setminus k_i}\left(y_i \left(K_e+1+\sum_{j:e\in k_j}\gamma_{ij}\right)\right)+\theta O_e^2 \geq K_e^2, & \ \ \forall e\in E\\
y_i\geq 0, & \ \ \forall i\in [n]
\end{array}
\end{displaymath}

\begin{theorem}
For any linear congestion game with restricted altruistic social context $({\cal G},\Gamma)$, it holds ${\sf PoA}({\cal G},\Gamma)\leq\frac{17}{3}$.
\end{theorem}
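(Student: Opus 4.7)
The plan is to exhibit a feasible solution of the dual program $DLP(K,O)$ with objective value $17/3$ whose coefficients depend neither on $K$ and $O$ nor on the matrix $\Gamma$. By the Weak Duality Theorem this uniformly upper bounds the ratio $\Sum(K)/\Sum(O)$ over every pure Nash equilibrium $K$ and every social optimum $O$, and hence bounds $\poa({\cal G},\Gamma)$. The ansatz, analogous to the one used by Bil\`o {\em et al.}~\cite{BCFG13} for the sum social context, is to set $y_i=5/3$ for every $i\in[n]$ and $\theta=17/3$.

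The key step is to eliminate the dependence on $\Gamma$ from each dual constraint via the two structural properties of the restricted altruistic setting. For a player $i$ with $e\in k_i\setminus o_i$, the inner quantity $\sum_{j\neq i:e\in k_j}\gamma_{ij}$ is nonnegative, so dropping it yields a valid lower bound on the left-hand side. For a player $i$ with $e\in o_i\setminus k_i$, the inner quantity $\sum_{j:e\in k_j}\gamma_{ij}$ enters with a negative sign; the hypothesis $\gamma_{ij}\le\gamma_{ii}=1$ together with the fact that the sum has exactly $K_e$ terms bounds it by $K_e$. Writing $c_e:=|\{i:e\in k_i\cap o_i\}|$, so that the two outer sums in the constraint range over $K_e-c_e$ and $O_e-c_e$ indices respectively, the dual constraint is then implied by
\begin{equation*}
y(K_e-c_e)K_e - y(O_e-c_e)(2K_e+1) + \theta O_e^2 \ge K_e^2.
\end{equation*}

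Rearranging in $c_e$ rewrites this as $(y-1)K_e^2 + yc_e(K_e+1) - yO_e(2K_e+1) + \theta O_e^2 \ge 0$, which is monotone increasing in $c_e\ge 0$; hence it suffices to verify it at $c_e=0$. What remains is a two-variable inequality in the nonnegative integers $K_e,O_e$, which under the substitution $y=5/3$, $\theta=17/3$ and multiplication by $3$ becomes $2K_e^2 - 10K_eO_e + 17O_e^2 - 5O_e \ge 0$. Viewed as a quadratic in $K_e$ with positive leading coefficient, its discriminant equals $-36O_e^2+40O_e$, which is nonpositive for $O_e=0$ and for every integer $O_e\ge 2$; the residual case $O_e=1$ factors as $2(K_e-2)(K_e-3)\ge 0$, and so holds at every integer $K_e$, with equality exactly at $K_e\in\{2,3\}$.

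The main obstacle is pinning down the correct multiplier $y=5/3$: it is the unique value that makes the reduced quadratic simultaneously tight at the two integer points $(O_e,K_e)=(1,2)$ and $(1,3)$, and any other choice either violates one of these instances or worsens the resulting $\theta$. Once this guess is made, the rest of the argument is the routine discriminant check sketched above, and the conclusion $\poa({\cal G},\Gamma)\le 17/3$ follows for every altruistic matrix $\Gamma$ with $\gamma_{ii}=1\ge\gamma_{ij}\ge 0$, regardless of whether pure Nash equilibria actually exist in $({\cal G},\Gamma)$.
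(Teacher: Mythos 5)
Your proof is correct and follows essentially the same route as the paper's: the same dual solution $y_i=5/3$, $\theta=17/3$, the same elimination of $\Gamma$ via $0\le\gamma_{ij}\le\gamma_{ii}=1$ (bounding $\sum_{j:e\in k_j}\gamma_{ij}$ by $K_e$), the same reduction to the $\Delta_e=0$ case, and the same discriminant check with the separate treatment of $O_e\in\{0,1\}$. Your explicit monotonicity argument in $c_e$ and the integer factorization $2(K_e-2)(K_e-3)\ge 0$ for $O_e=1$ merely spell out steps the paper leaves as "easy to see."
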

\begin{proof}
Consider the dual solution such that $\theta=17/3$ and $y_i=5/3$ for each $i\in [n]$. With these values, for each $e\in E$, the dual constraint becomes
$$5\sum_{i:e\in k_i\setminus o_i}\left(K_e+\sum_{j\neq i:e\in k_j}\gamma_{ij}\right)-5\sum_{i:e\in o_i\setminus k_i}\left(K_e+1+\sum_{j:e\in k_j}\gamma_{ij}\right)+17O_e^2 \geq 3K_e^2.$$
Let $\Delta_e=|\{i\in [n]:e\in k_i\cap o_i\}|$. Since $({\cal G},\Gamma)$ is a linear congestion game with restricted altruistic social context, it holds $\gamma_{ij}\leq 1$ for each $i,j\in [n]$. Hence, the dual constraint is obviously verified when it holds
$$5\sum_{i:e\in k_i\setminus o_i}K_e-5\sum_{i:e\in o_i\setminus k_i}\left(2K_e+1\right)+17O_e^2 \geq 3K_e^2,$$
which is equivalent to
\begin{equation}\label{eq3}
5(K_e-\Delta_e)K_e-5(O_e-\Delta_e)\left(2K_e+1\right)+17O_e^2 \geq 3K_e^2.
\end{equation}
It is easy to see that inequality (\ref{eq3}) is always true when it holds
\begin{equation}\label{eq4}
5K^2_e-5O_e\left(2K_e+1\right)+17O_e^2 \geq 3K_e^2.
\end{equation}
To see that inequality (\ref{eq4}) is always verified for any pair of non-negative integers $(K_e,O_e)$, note that the discriminant of its associate equality, when solved for $K_e$, is non-positive for any $O_e\geq 2$ and that inequality (\ref{eq4}) is always verified for any non-negative values of $K_e$ when $O_e\in\{0,1\}$.\qed
\end{proof}

When compared to the upper bound of $7$ for the price of anarchy of coarse correlated equilibria given by de Keijzer {\em et al.} \cite{DSAB13}, our upper bound implies that either the one for coarse correlated equilibria is not tight, or the prices of anarchy of coarse correlated equilibria and pure Nash equilibria are different. Such a latter case would be quite significant since this does not happen in linear congestion games with selfish players.

We now give a marching lower bound which holds even in the special case in which $\Gamma$ is a symmetric boolean matrix and the underlying linear congestion game is a load balancing one. The basic idea of our construction, suitably extended to comply with our altruistic scenario, is borrowed from Caragiannis {\em et al.} \cite{CFKKM06}.

\begin{theorem}\label{lbpoa}
For any $\epsilon>0$, there exists a linear congestion game with restricted altruistic social context $({\cal G},\Gamma)$, such that $\cal G$ is a load balancing game and $\Gamma$ is a symmetric boolean matrix, for which ${\sf PoA}({\cal G},\Gamma)\geq\frac{17}{3}-\epsilon$.
\end{theorem}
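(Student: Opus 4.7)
My plan is to construct, for every pair of positive integers $(k,d)$ with $d>k$, a linear load balancing game $\mathcal{G}_{k,d}$ together with a symmetric boolean matrix $\Gamma_{k,d}$ falling in the restricted altruistic setting, whose price of anarchy equals $(17x-7)/(3x-1)$ with $x=(9/7)^k$ and hence approaches $17/3$ as $k$ and $d-k$ grow. The resources of $\mathcal{G}_{k,d}$ are the nodes of a rooted tree of depth $d$: each node at depth $i<k$ has three children (\emph{type-$3$} resources), each node at depth $i\in[k,d-1]$ has two children (\emph{type-$2$} resources), and the leaves (at depth $d$) are also type-$2$ resources. Every player is associated with an oriented tree edge $(u,v)$ where $u$ is the parent of $v$; her strategy set is $\{\{u\},\{v\}\}$, with $\{u\}$ chosen as her Nash strategy and $\{v\}$ as her optimum strategy. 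To cover the unique Opt slot of the root I attach one auxiliary Nash-only resource (used by an extra player whose Opt is the root), and to absorb the two ``hanging'' Nash users of every leaf I attach two auxiliary Opt-only resources per leaf. By construction every tree resource $e$ satisfies $O_e=1$ and $K_e\in\{2,3\}$, matching the pair of tight cases $(K_e,O_e)\in\{(2,1),(3,1)\}$ identified in the proof of the $17/3$ upper bound.

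The altruism matrix $\Gamma_{k,d}$ is defined by $\gamma_{ii}=1$ for every player $i$ and, for $i\neq j$, $\gamma_{ij}=1$ if and only if some tree resource has one of $i,j$ as its unique Opt user and the other among its Nash users; otherwise $\gamma_{ij}=0$. Symmetry and the boolean property are immediate, and $\gamma_{ii}=1\geq\gamma_{ij}$ makes the social context restricted altruistic. The coefficients $\alpha_e$ are then chosen geometrically so as to make every Nash condition tight: adjacent type-$3$ levels use the ratio $\alpha^{(i+1)}/\alpha^{(i)}=3/7$, the interface from the last type-$3$ level to the first type-$2$ level uses $3/5$, adjacent type-$2$ levels use $2/5$, and the boundaries are set to $\alpha_{\text{aux-N}}=7\,\alpha^{(0)}$ and $\alpha_{\text{aux-O}}=2\,\alpha^{(d)}$.

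For any player $p$ with Nash resource $e_N$ and Opt resource $e_O$, the altruism contribution $A_p$ in the Nash condition vanishes (no co-Nash user of $e_N$ is $\gamma$-linked to $p$) while $B_p$ equals $K_{e_O}$ (every Nash user of $e_O$ is $\gamma$-linked to $p$, since $p$ is the Opt user of $e_O$); the condition thus reduces to $\alpha_{e_N}K_{e_N}\leq \alpha_{e_O}(2K_{e_O}+1)$, which is exactly the equality enforced by the geometric cascade above. Summing the resulting geometric series and letting $d-k\to\infty$, so that the auxiliary-Opt-only contribution to $\Sum(O)$ decays like $(4/5)^{d-k}$ and vanishes in the limit, I obtain $\Sum(K)=\frac{119}{2}x-\frac{49}{2}$ and $\Sum(O)=\frac{21}{2}x-\frac{7}{2}$, whose ratio $(17x-7)/(3x-1)$ tends to $17/3$ as $k\to\infty$; this yields the claim for any prescribed $\epsilon>0$.

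The delicate part of the argument is the verification that $\Gamma_{k,d}$ is consistent, since every player acts simultaneously as a Nash user at one tree resource and as the Opt user at another, so her $\gamma$-entries are prescribed independently by two distinct resources and a $\gamma=0$ requirement arising at one resource could in principle collide with a $\gamma=1$ requirement arising at another. The tree layout rules out any conflict because the ``co-Nash partners'' of a player (at her Nash resource) live one level above her ``Nash-user partners she cares about'' (at her Opt resource), so the two partner sets are always disjoint.
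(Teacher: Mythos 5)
Your construction is essentially the paper's: the same game-graph tree (ternary levels on top of binary levels), the same geometric latency ratios $3/7$, $3/5$, $2/5$, and the same parent/children altruism pattern in $\Gamma$, yielding a tight Nash equilibrium whose cost ratio tends to $17/3$; the checks of the Nash conditions and the limiting sums are correct. The only deviation is cosmetic boundary handling (auxiliary players/resources forcing $O_e=1$, $K_e\in\{2,3\}$ at the root and leaves, where the paper simply leaves the root unused in $O$ and the leaves unused in $K$ with $\alpha_{2h}=2\alpha_{2h-1}$), which does not change the argument.
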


Note that such a lower bound implies that the the assumption of identical resources in crucial in the upper bound of $2+\sqrt{5}$ given by de Keijzer {\em et al.} \cite{DSAB13} for load balancing games with restricted altruistic social context.

We now turn our attention to the study of the price of stability. By exploiting the potential function defined in the previous section and the fact that there exists a pure Nash equilibrium $K$ such that $\Phi(K)\leq\Phi(O)$, we easily obtain the following upper bound.

\begin{theorem}
For any linear congestion game with restricted altruistic social context $({\cal G},\Gamma)$ such that $\Gamma$ is symmetric, it holds ${\sf PoS}({\cal G},\Gamma)\leq 2$.
\end{theorem}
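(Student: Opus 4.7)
The plan is to combine the exact potential function $\Phi$ built in the proof of Theorem~\ref{existence} with the standard fact that any global minimizer of an exact potential is a pure Nash equilibrium. The crux is to sandwich $\Phi(S)$ between two constant multiples of $\Sum(S)$, so that an inequality at the potential level transfers to one at the social-value level.

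First, I would rewrite
$$\Phi(S) = \frac{1}{2}\sum_{e\in E}\alpha_e n_e(S)^2 + \frac{1}{2}\sum_{e\in E}\alpha_e n_e(S) + \frac{1}{2}\sum_{e\in E}\alpha_e\sum_{(i,j)\in P_e(S)}\gamma_{ij} + \sum_{e\in E}\beta_e n_e(S).$$
In a restricted altruistic context one may assume $\gamma_{ii}=1$ for each $i\in[n]$, so $0\leq\gamma_{ij}\leq 1$ for all $i,j\in[n]$, and moreover $|P_e(S)| = n_e(S)(n_e(S)-1)$. Hence
$$0 \;\leq\; \sum_{(i,j)\in P_e(S)}\gamma_{ij} \;\leq\; n_e(S)(n_e(S)-1).$$
Dropping the two nonnegative middle summands of $\Phi(S)$ gives $\Phi(S) \geq \tfrac{1}{2}\Sum(S)$. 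Using the upper bound on the inner sum together with the identity $n_e(S) + n_e(S)(n_e(S)-1) = n_e(S)^2$ collapses the first three terms into $\sum_{e\in E}\alpha_e n_e(S)^2$, producing $\Phi(S) \leq \Sum(S)$.

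With these two inequalities in hand, the conclusion follows immediately. Let $K$ be a strategy profile minimizing $\Phi$ over ${\cal S}$: since $\Phi$ is an exact potential function (Theorem~\ref{existence}), $K$ is a local, and therefore a pure Nash, equilibrium, and in particular $\Phi(K) \leq \Phi(O)$ for a social optimum $O$. Chaining the three estimates yields
$$\tfrac{1}{2}\Sum(K) \;\leq\; \Phi(K) \;\leq\; \Phi(O) \;\leq\; \Sum(O),$$
so $\Sum(K)/\Sum(O)\leq 2$, and hence $\pos({\cal G},\Gamma)\leq 2$. No real obstacle is expected; the only subtlety is the legitimacy of the normalization $\gamma_{ii}=1$, which has already been justified in the Preliminaries (the footnote following the definition of restricted altruistic social context), and the very mild observation that bounding $\Phi(S)$ by $\Sum(S)$ is only possible because the entries of $\Gamma$ are bounded by $1$, which is precisely the defining feature of the restricted regime.
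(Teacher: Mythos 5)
Your proposal is correct and follows essentially the same route as the paper: sandwich the exact potential via $\tfrac{1}{2}\Sum(S)\leq\Phi(S)\leq\Sum(S)$ (using $\gamma_{ij}\in[0,1]$ and $|P_e(S)|=n_e(S)(n_e(S)-1)$), pick a pure Nash equilibrium $K$ with $\Phi(K)\leq\Phi(O)$, and chain the inequalities to get $\Sum(K)\leq 2\Sum(O)$. The only cosmetic differences are that you take $K$ to be a global minimizer of $\Phi$ rather than the equilibrium reached from $O$ by improving deviations, and you keep the $\beta_e$ terms explicitly instead of normalizing them away; both choices are harmless.
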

\begin{proof}
Let $K$ be a pure Nash equilibrium obtained after a sequence of improving deviation starting from $O$. The existence of $K$ is guaranteed by the existence of the potential function $\Phi$. Moreover, it holds $\Phi(K)\leq\Phi(O)$. Hence, it follows that
\begin{eqnarray*}
{\sf SUM}(K) & = & \sum_{e\in E}\left(\alpha_e K_e^2\right)\\
& \leq & \sum_{e\in E}\left(\alpha_e\left(K_e(K_e+1)+\sum_{(i,j)\in P_e(K)}\gamma_{ij}\right)\right)\\
& = & \Phi(K)\\
& \leq & \Phi(O)\\
& = & \sum_{e\in E}\left(\alpha_e\left(O_e(O_e+1)+\sum_{(i,j)\in P_e(O)}\gamma_{ij}\right)\right)\\
& \leq & \sum_{e\in E}\left(\alpha_e\left(O_e(O_e+1)+O_e(O_e-1)\right)\right)\\
& = & 2\sum_{e\in E}\left(\alpha_e O_e^2\right)\\
& = & 2{\sf SUM}(O),
\end{eqnarray*}
where the last inequality follows from the fact that $\gamma_{ij}\in [0,1]$ for each $i,j\in [n]$ and $|P_e(O)|=O_e(O_e-1)$.\qed
\end{proof}

In this case, we are only able to provide a lower bound of $1+\frac{1}{\sqrt{2}}\approx 1.707$.

\begin{theorem}\label{lbpos}
For any $\epsilon>0$, there exists a linear congestion game with restricted altruistic social context $({\cal G},\Gamma)$, such that $\Gamma$ is a symmetric boolean matrix, for which ${\sf PoS}({\cal G},\Gamma)\geq 1+\frac{1}{\sqrt{2}}-\epsilon$.
\end{theorem}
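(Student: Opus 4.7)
The plan is to construct, for each $\epsilon>0$, an explicit linear congestion game with restricted altruistic social context $(\mathcal{G},\Gamma)$, where $\Gamma$ is a symmetric boolean matrix, and then to argue that every pure Nash equilibrium of $(\mathcal{G},\Gamma)$ has social cost at least $(1+1/\sqrt{2}-\epsilon)\cdot\mathrm{SUM}(O)$, where $O$ is an explicit social optimum. In the spirit of the construction used in the proof of Theorem~\ref{lbpoa}, I would specify the game through a small combinatorial object (likely a load balancing game on a regular graph structure), identify $O$, identify a candidate pure Nash equilibrium $K$ of strictly larger social value, and then show that $K$ is actually the best pure Nash equilibrium. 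The main analytic tool is the exact potential function $\Phi$ from the proof of Theorem~\ref{existence}, which governs which strategy profiles can arise as endpoints of best-response dynamics starting from $O$.

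The matrix $\Gamma$ should be chosen so that pairs of players sharing a resource at $O$ care about each other, while pairs sharing a resource at $K$ do not. This asymmetry, encoded in the extra term $\tfrac{1}{2}\alpha_e\sum_{(i,j)\in P_e(S)}\gamma_{ij}$ in $\Phi$, inflates $\Phi(O)$ without inflating $\Phi(K)$: more ``room'' for a bad equilibrium becomes available, which is the mechanism that pushes the price of stability above the classical $1+1/\sqrt{3}$ of the non-altruistic case. The latencies on the resources used at $O$ and on those used at $K$ are split into two types and tuned, via the deviation formula~(\ref{eq-deviation}), so that at $K$ every altruistic deviation is non-improving while at the same time the ratio $\mathrm{SUM}(K)/\mathrm{SUM}(O)$ is maximised subject to this constraint. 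The final computation reduces to a small quadratic optimisation in the congestion values of the two active families of resources, whose supremum equals $1+1/\sqrt{2}$ and is attained in the limit as the size of the construction grows.

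The hard part is to certify that $K$ is not only \emph{a} pure Nash equilibrium but the \emph{best} one: a priori, an intermediate profile obtained by relocating only a subset of players from $K$ toward $O$ could itself be a Nash equilibrium of strictly smaller social value, which would invalidate the lower bound. A natural way to handle this is to equip the construction with a large symmetry group (for instance, cyclic or block-transitive automorphisms acting transitively on the players), so that up to relabelling only a short list of strategy profiles needs to be inspected; each such candidate profile can then be ruled out either by observing that its social value already exceeds $(1+1/\sqrt{2}-\epsilon)\cdot\mathrm{SUM}(O)$, or by exhibiting, via a direct substitution into~(\ref{eq-deviation}), a player with a profitable altruistic deviation. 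The fact that the resulting bound does not meet the upper bound of $2$ of the previous theorem is, as the text acknowledges, an open gap rather than an artefact of the method.
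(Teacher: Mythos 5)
Your proposal is a plan rather than a proof: the entire content of Theorem~\ref{lbpos} is an explicit instance together with a verification that \emph{every} pure Nash equilibrium of that instance is expensive, and neither is actually supplied. You defer the construction (``likely a load balancing game on a regular graph structure''), the tuning of the latencies, the equilibrium verification, and the final optimisation (``a small quadratic optimisation \ldots whose supremum equals $1+1/\sqrt{2}$'') to future work, and you never exhibit the matrix $\Gamma$, the profiles $K$ and $O$, or a single concrete inequality. In particular, the hardest point, which you correctly identify --- ruling out cheaper equilibria --- is addressed only by the hope that a symmetry group will reduce the check to ``a short list of profiles''; no such construction or list is given, so the lower bound is not established. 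Note also that the potential function cannot by itself drive a \emph{lower} bound on the price of stability: $\Phi(K)\leq\Phi(O)$ constrains which profiles are reachable from $O$, but says nothing about whether some other, cheap, equilibrium exists elsewhere.

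To the extent that your sketch commits to specifics, they diverge from what actually works. The paper's instance is not a load balancing game: players are split into three groups $P,P',P''$ with $|P|=|P'|=n_1$ and $|P''|=n_2$, players in $P$ (resp.\ $P'$) choose between a strategy $k_i$ (resp.\ $k'_i$) consisting of $n_1$ shared ``cross'' resources of type $C$ (resp.\ $D$) plus a common resource $E$ (resp.\ $F$) congested by the $n_2$ dummy players of $P''$, and a strategy $o_i$ (resp.\ $o'_i$) containing a private threshold resource of type $A$ (resp.\ $B$). The altruism is the boolean symmetric bipartite pattern $\gamma_{ij}=1$ exactly when $i\in P, j\in P'$ or vice versa; crucially, at the optimum $O$ no two players of $P\cup P'$ share any resource at all, so your proposed mechanism (``pairs sharing a resource at $O$ care about each other, while pairs sharing a resource at $K$ do not'') is not the one at work. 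Moreover, the paper does not reason about potentials or best-response dynamics from $O$: it computes the perceived costs $\widehat{cost}_k$ and $\widehat{cost}_o$ in an arbitrary profile with $h,h'$ players on the $k$-side and shows that some $o$-player always has a profitable deviation, so $K$ is the \emph{unique} pure Nash equilibrium; the bound then follows by computing ${\sf SUM}(K)/{\sf SUM}(O)$ and choosing $n_1=2(1+\sqrt{2})n_2$. Without an explicit construction and an argument of this kind (uniqueness, or an exhaustive control of all equilibria), your proposal leaves the theorem unproved.
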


\section{Results for Simple Social Contexts}\label{sec-special}

In this section, we focus on the special case given by model of Caragiannis {\em et al.}~\cite{CKKKP10} in which, for each $i\in [n]$, it holds $\gamma_{ii}=1-\gamma_i$ and $\gamma_{ij}=\gamma_i$ for each $j\neq i\in [n]$, where $\gamma_i\in [0,1]$. In such a model, the restricted altruistic social context coincides with the case in which, for each $i\in [n]$, it holds $\gamma_i\leq 1/2$. Caragiannis {\em et al.} \cite{CKKKP10} show that, when $\gamma_i=\gamma$ for each $i\in [n]$, the price of anarchy is exactly $\frac{2-\gamma}{1-\gamma}$ for general altruistic social contexts and $\frac{5-\gamma}{2-\gamma}$ in the restricted one.

First of all, we prove that pure Nash equilibria are always guaranteed to exist via an exact potential function argument. An existential result had already been given by Hoefer and Skopalik \cite{HS09}, nevertheless, their proof makes use of a weighted potential function. So, our result is slightly stronger and, more importantly, provides a better potential function to be subsequently exploited in the derivation of an upper bound on the price of stability of these games.

Let ${\cal V}_n:=[0,1]^n$ be the set of $n$-dimensional vectors whose entries belong to the interval $[0,1]$. Given a vector $V=(v_1,\ldots,v_n)\in{\cal V}_n$, denote with $\Gamma_V$ the $n\times n$ matrix $\Gamma$ such that, for each $i\in [n]$, it holds $\gamma_{ii}=1-v_i$ and $\gamma_{ij}=v_i$ for each $j\neq i\in [n]$.

\begin{theorem}
Each $n$-player linear congestion game with altruistic social context $({\cal G},\Gamma)$ such that $\Gamma=\Gamma_V$ for some $V\in{\cal V}_n$ admits an exact potential function.
\end{theorem}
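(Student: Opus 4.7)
My plan is to guess an exact potential of the form ``Rosenthal potential minus a per-player correction'' and verify the potential identity directly. Let $\Psi(S):=\sum_{e\in E}\sum_{k=1}^{n_e(S)}\ell_e(k)$ be the standard Rosenthal potential of $\cal G$, which satisfies $\Psi(S)-\Psi(S_{-i}\diamond t)=c_i(S)-c_i(S_{-i}\diamond t)$ for every $i\in[n]$ and $t\in S_i$. For each player $j$ introduce the ``stand--alone cost'' $Q_j(s_j):=\sum_{e\in s_j}\ell_e(1)=\sum_{e\in s_j}(\alpha_e+\beta_e)$, which is what $j$ would pay on $s_j$ if she were alone on each of its resources; crucially $Q_j(s_j)$ depends only on $s_j$. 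My candidate is
$$\Phi(S)\;:=\;\Psi(S)\;-\;\sum_{j\in[n]} v_j\,Q_j(s_j).$$

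The motivation comes from rewriting the altruistic cost. Since $\Gamma=\Gamma_V$, one has $\widehat{c}_i(S)=(1-v_i)\,c_i(S)+v_i\sum_{j\neq i}c_j(S)=c_i(S)+v_i\bigl[\Sum(S)-2\,c_i(S)\bigr]$. A direct resource-by-resource expansion yields the identity
$$\Sum(S)-2\,\Psi(S)\;=\;-\sum_{j\in[n]}Q_j(s_j),$$
so any unilateral deviation of $i$ from $s_i$ to $t$ changes $\Sum(S)-2\,c_i(S)$ by exactly $Q_i(t)-Q_i(s_i)$, because every summand with $j\neq i$ in $\sum_j Q_j(s_j)$ is unaffected. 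Combining this with the exact-potential property of $\Psi$ gives
$$\widehat{c}_i(S)-\widehat{c}_i(S_{-i}\diamond t)\;=\;\bigl[c_i(S)-c_i(S_{-i}\diamond t)\bigr]\,+\,v_i\bigl[Q_i(t)-Q_i(s_i)\bigr],$$
which can alternatively be read off directly from Equation~(\ref{eq-deviation}) after substituting $\gamma_{ii}=1-v_i$ and $\gamma_{ij}=v_i$ for $j\neq i$ and simplifying.

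To finish, I would verify that $\Phi$ produces the same quantity under this deviation. The Rosenthal term contributes $c_i(S)-c_i(S_{-i}\diamond t)$, and in the correction $\sum_j v_j Q_j(s_j)$ only the summand $j=i$ changes, contributing $v_i[Q_i(t)-Q_i(s_i)]$. Hence $\Phi(S)-\Phi(S_{-i}\diamond t)$ agrees with $\widehat{c}_i(S)-\widehat{c}_i(S_{-i}\diamond t)$, so $\Phi$ is an exact potential and pure Nash equilibria exist. The main conceptual obstacle is guessing the right form of $\Phi$: naive ansatze of the type $a\,\Psi+b\,\Sum$ succeed when $v_i=v$ is uniform (the change in $\widehat{c}_i$ then decomposes as $(1-v)\delta_i+v\,\delta'_i$ for two Rosenthal-like quantities), but when $v_i$ varies with $i$ the weights also vary with the deviating player, so no single linear combination of $\Psi$ and $\Sum$ can suffice. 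The fix is to absorb the $v_i$-dependent piece into a sum of per-player corrections $v_j Q_j(s_j)$, each depending only on one player's own strategy: such a sum is transparent to every other player's unilateral deviation and produces precisely the required correction for $i$'s own deviation. This exploits the special structure of $\Gamma_V$, namely that each off-diagonal entry $\gamma_{ij}$ depends only on the row index $i$.
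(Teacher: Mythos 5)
Your proposal is correct and is essentially the paper's own proof in disguise: expanding $\Psi(S)-\sum_j v_j Q_j(s_j)$ resource by resource (using $\sum_j v_j Q_j(s_j)=\sum_{e\in E}(\alpha_e+\beta_e)\sum_{j:e\in s_j}v_j$) gives exactly the potential $\Phi$ used in the paper, and your deviation identity coincides with equation~(\ref{eq-dev-etero}) obtained there from equation~(\ref{eq-deviation}). The only difference is presentational: you verify exactness via the Rosenthal potential plus a per-player correction that is unaffected by other players' deviations, rather than by the paper's direct term-by-term computation.
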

\begin{proof}
Consider a strategy profile $S=(s_1,\ldots,s_n)$, a player $i\in [n]$ and a strategy $t\in S_i$, and again denote with $x_e:=n_e(S)$. From equation (\ref{eq-deviation}), since $\sum_{j\neq i:e\in s_j}\gamma_{ij}=(x_e-1)v_i$ and $\sum_{j:e\in s_j}\gamma_{ij}=x_e v_i$, it follows that
\begin{eqnarray}\label{eq-dev-etero}
\begin{split}
& \widehat{c}_i(S)-\widehat{c}_i(S_{-i}\diamond t)\\
= & \sum_{e\in s_i\setminus t}\left(\alpha_e\left(x_e-v_i\right)+(1-v_i)\beta_e\right)-\sum_{e\in t\setminus s_i}\left(\alpha_e\left(x_e+1-v_i\right)+(1-v_i)\beta_e\right).
\end{split}
\end{eqnarray}
Consider, now, the following potential function $$\Phi(S)=\frac 1 2\sum_{e\in E}\left(\alpha_e\left(x_e(x_e+1)-2\sum_{j:e\in s_j}v_j\right)+2\beta_e \sum_{j:e\in s_j}(1-v_j)\right).$$
It holds
\begin{displaymath}
\begin{array}{cl}
& \Phi(S)-\Phi(S_{-i}\diamond t)\\
= & \displaystyle\frac 1 2 \sum_{e\in s_i\setminus t}\left(\alpha_e\left(2x_e-2v_i\right)+2(1-v_i)\beta_e\right)\\
& -\displaystyle\frac 1 2 \sum_{e\in t\setminus s_i}\left(\alpha_e\left(2(x_e+1)-2v_i\right)+2(1-v_i)\beta_e\right)\\
= & \displaystyle\sum_{e\in s_i\setminus t}\left(\alpha_e\left(x_e-v_i\right)+(1-v_i)\beta_e\right)-\sum_{e\in t\setminus s_i}\left(\alpha_e\left(x_e+1-v_i\right)+(1-v_i)\beta_e\right)\\
\end{array}
\end{displaymath}
which shows that $\Phi$ is an exact potential function for $({\cal G},\Gamma)$.\qed
\end{proof}

By exploiting the potential function defined above, we obtain an upper bound on the price of stability for the case in which $v_i=v$ for each $i\in [n]$ as follows.

The fact that there exists a pure Nash equilibrium $K$ such that $\Phi(K)\leq\Phi(O)$ easily implies the following inequality (where, as usual, we have removed the terms $\beta_e$ from the latency functions):
\begin{equation}\label{eqpos}
\sum_{e\in E}\left(\alpha_e\left(K_e(K_e+1)-2vK_e-O_e(O_e+1)+2vO_e\right)\right)\leq 0,
\end{equation}
where we have used the equalities $\sum_{j:e\in k_j}v_j=v K_e$ and $\sum_{j:e\in o_j}v_j=v O_e$.

By exploiting the inequality $\widehat{c}_i(K)-\widehat{c}_i(K_{-i}\diamond o_i)\leq 0$, we obtain that, for each $i\in [n]$, it holds
\begin{equation*}
\sum_{e\in k_i\setminus o_i}\left(\alpha_e\left(K_e-v\right)\right)-\sum_{e\in o_i\setminus k_i}\left(\alpha_e\left(K_e+1-v\right)\right)\leq 0,
\end{equation*}
which implies
\begin{equation}\label{eqnash}
\sum_{e\in k_i}\left(\alpha_e\left(K_e-v\right)\right)-\sum_{e\in o_i}\left(\alpha_e\left(K_e+1-v\right)\right)\leq 0.
\end{equation}

Using both inequalities (\ref{eqpos}) and (\ref{eqnash}), the primal formulation $LP(K,O)$ becomes the following one.

\begin{displaymath}
\begin{array}{ll}
maximize \displaystyle\sum_{e\in E}\left(\alpha_e K_e^2\right)\\\vspace{0.1cm}
subject\ to\\\vspace{0.1cm}
\displaystyle\sum_{e\in E}\left(\alpha_e\left(K_e(K_e+1)-2vK_e-O_e(O_e+1)+2vO_e\right)\right)\leq 0\\\vspace{0.1cm}
\displaystyle\sum_{e\in k_i}\left(\alpha_e \left(K_e-v\right)\right)-\sum_{e\in o_i}\left(\alpha_e\left(K_e+1-v\right)\right)\leq 0, & \ \ \forall i\in [n]\\\vspace{0.1cm}
\displaystyle\sum_{e\in E}\left(\alpha_e O_e^2\right) = 1,\\\vspace{0.1cm}
\alpha_e\geq 0, & \ \ \forall e\in E
\end{array}
\end{displaymath}
The dual program $DLP(K,O)$ is
\begin{displaymath}
\begin{array}{ll}
minimize\ \theta\\\vspace{0.1cm}
subject\ to\\\vspace{0.1cm}
x\left(K_e(K_e+1)-2vK_e-O_e(O_e+1)+2vO_e\right)\\
\ \ +\displaystyle\sum_{i:e\in k_i}\left(y_i \left(K_e-v\right)\right)-\sum_{i:e\in o_i}\left(y_i \left(K_e+1-v\right)\right)+\theta O_e^2 \geq K_e^2, & \ \ \forall e\in E\\\vspace{0.1cm}
x\geq 0,\\\vspace{0.1cm}
y_i\geq 0, & \ \ \forall i\in [n]
\end{array}
\end{displaymath}

\begin{theorem}\label{ubposetero}
For any $n$-player linear congestion game with altruistic social context $({\cal G},\Gamma)$ such that $\Gamma=\Gamma_V$ for some $V\in{\cal V}_n$ with $v_i=v$ for each $i\in [n]$, it holds ${\sf PoS}({\cal G},\Gamma)\leq\frac{(\sqrt{3}+1)(1-v)}{\sqrt{3}-v(\sqrt{3}-1)}$ when $v\in [0,1/2]$ and ${\sf PoS}({\cal G},\Gamma)\leq\frac{3-\sqrt{3}-2v(2-\sqrt{3})}{2(1-v)}$ when $v\in [1/2,1]$.
\end{theorem}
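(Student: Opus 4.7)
The plan is to apply the primal-dual method whose setup has been prepared in the paragraphs immediately preceding the statement. The Nash equilibrium $K$ obtained by descending the potential $\Phi$ from the social optimum $O$ (whose existence is guaranteed by the exact potential function established just above) satisfies both the Nash inequalities \eqref{eqnash} and the potential inequality \eqref{eqpos}; these are precisely the primal constraints of $LP(K,O)$. I shall exhibit, for each of the two regimes $v\in[0,1/2]$ and $v\in[1/2,1]$, a feasible solution to the dual $DLP(K,O)$ whose value of $\theta$ coincides with the right-hand side of the statement. Because the dual solution will depend on $v$ but not on $K$ or $O$, weak duality then bounds ${\sf SUM}(K)/{\sf SUM}(O)$, and hence ${\sf PoS}({\cal G},\Gamma)$, by $\theta$.

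The first step is to exploit symmetry: since every player has the same altruism parameter $v$, I set $y_i:=y$ uniformly in $i$. The sums $\sum_{i:e\in k_i}y_i(K_e-v)$ and $\sum_{i:e\in o_i}y_i(K_e+1-v)$ then collapse to $yK_e(K_e-v)$ and $yO_e(K_e+1-v)$ respectively, and the per-resource dual constraint simplifies, after collecting terms, to the bivariate polynomial inequality
\begin{equation*}
(x+y-1)K_e^2+\bigl[x(1-2v)-yv\bigr]K_e-yK_eO_e+(\theta-x)O_e^2+\bigl[(2v-1)x-y(1-v)\bigr]O_e\;\geq\;0,
\end{equation*}
required to hold for every pair $(K_e,O_e)\in\mathbb{N}^2$.

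The second step is to minimize $\theta$ over $(x,y)\in\mathbb{R}_{\geq 0}^2$ subject to the above. For each fixed $O_e$ the left-hand side is a quadratic in $K_e$ with leading coefficient $x+y-1$; I enforce $x+y\geq 1$ so that it is convex, whereupon feasibility reduces to verifying, for each $O_e\geq 0$, that the two integers $K_e$ closest to the continuous minimiser yield a non-negative value. The critical $(K_e,O_e)$ pairs turn out to lie on the line $O_e=1$, with the critical integer $K$ taking one value in the restricted regime $v\leq 1/2$ and a different one for $v\geq 1/2$. Writing the corresponding critical inequality with equality, together with $(1-v)(2x+y)=1$ (the inequality at $(K_e,O_e)=(1,0)$, which is binding throughout), gives a linear system in $x$ and $y$, and plugging the solution back into the constraint produces a single quadratic equation for $\theta$ whose positive root matches the claimed expressions, with the $\sqrt{3}$ arising as the discriminant. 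As a sanity check, the first expression at $v=0$ evaluates to $1+1/\sqrt{3}$, recovering the classical price of stability of linear congestion games (as it should, since $v=0$ switches off the social context), and both expressions coincide at $v=1/2$ with the value $1$, consistent with the known fact that at this balance point pure Nash equilibria are socially optimal.

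Once the triple $(x,y,\theta)$ has been fixed in each regime, the final step is to verify that the inequality above holds for every $(K_e,O_e)\in\mathbb{N}^2$ and not only for the two pairs used for calibration. Convexity in $K_e$ together with non-negativity of the $O_e^2$-coefficient (which holds because the $(0,1)$-constraint is satisfied) reduces this to a finite case analysis over small $O_e$ plus a straightforward asymptotic argument for large $O_e$. The main obstacle is the preceding calibration step: correctly identifying the critical $(K_e,O_e)$ pair in each regime and symbolically eliminating $x$ and $y$ so as to match the stated expressions exactly. The switch at $v=1/2$ is natural because it is precisely the boundary of the restricted altruistic social context, where the two branches meet at $\theta=1$, and the emergence of $\sqrt{3}$ is characteristic of the price-of-stability analysis for linear latency functions.
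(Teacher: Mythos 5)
Your overall framework is the paper's: treat inequality (\ref{eqpos}) (from $\Phi(K)\leq\Phi(O)$) and the Nash inequalities (\ref{eqnash}) as primal constraints, take $y_i=y$ uniform, and exhibit a dual solution $(x,y,\theta)$ with $\theta$ equal to the stated bound; your simplified per-resource constraint is also algebraically correct. The gap is in the calibration step, i.e.\ precisely the part you flag as the ``main obstacle''. Your claim that the constraint at $(K_e,O_e)=(1,0)$, namely $(1-v)(2x+y)\geq 1$, is binding throughout is false in the regime $v\in[0,1/2)$. Since Theorem \ref{lbposspecial} shows the stated bound is the LP optimum, any dual solution achieving it must make tight the constraints corresponding to the resources of the worst-case instance: for $v\in[0,1/2]$ these are $(K_e,O_e)=(0,1)$ (giving $\theta=(1-v)(2x+y)$), $(K_e,O_e)=(1,1)$ (giving $\theta=1+y$), and the asymptotic ray $K_e/O_e\to 2+\sqrt{3}$ (giving the vanishing-discriminant condition $y^2=4(x+y-1)(\theta-x)$); one then gets $(1-v)(2x+y)=\theta>1$, so your $(1,0)$ equation is strictly slack there. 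If you actually solve your system at $v=0$ (say with the $(1,1)$ constraint as the ``critical $O_e=1$'' pair, $(1-v)(2x+y)=1$, and the discriminant condition as the extra equation) you obtain $\theta=\frac{3+\sqrt{3}}{2}\approx 2.37$ with $x<0$, not $1+1/\sqrt{3}$; taking $(0,1)$ as the critical pair instead forces $\theta=1$. Either way your calibration cannot reproduce the claimed expressions in the restricted regime; the $(1,0)$ constraint is the right binding one only for $v\in[1/2,1]$, where indeed the paper's solution satisfies $(1-v)(2x+y)=1$ and $\theta=1+y$.

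A second, related weakness: the verification plan ``convexity in $K_e$ plus non-negative $O_e^2$-coefficient reduces to small $O_e$ plus a straightforward asymptotic argument'' is not enough, because at the optimal $(x,y,\theta)$ the constraint is tight along an entire ray ($K_e/O_e=2+\sqrt{3}$ for $v\leq 1/2$, $K_e/O_e=2-\sqrt{3}$ for $v\geq 1/2$): the homogeneous part has a double root, so feasibility for all integer pairs hinges on showing the full discriminant in $K_e$ is non-positive for every $O_e\geq 2$ (as the paper does), together with the direct checks at $O_e\in\{0,1\}$; a finite case analysis in $O_e$ plus a crude asymptotic estimate does not cover the regime where both congestions are large and comparable. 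So the proposal follows the paper's method but, as written, both the derivation of the dual values and the feasibility verification are incomplete, and the stated calibration would produce wrong or infeasible multipliers for $v\in[0,1/2)$.
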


We now show matching lower bounds.

\begin{theorem}\label{lbposspecial}
For any $\epsilon>0$, there exists an $n$-player linear congestion game with altruistic social context $({\cal G},\Gamma)$ such that $\Gamma=\Gamma_V$ for some $V\in{\cal V}_n$ with $v_i=v\in [0,1/2]$ for each $i\in [n]$ for which it holds ${\sf PoS}({\cal G},\Gamma)\geq\frac{(\sqrt{3}+1)(1-v)}{\sqrt{3}-v(\sqrt{3}-1)}-\epsilon$ and an $n$-player linear congestion game with altruistic social context $({\cal G}',\Gamma')$ such that $\Gamma'=\Gamma'_V$ for some $V\in{\cal V}_n$ with $v_i=v\in [1/2,1]$ for each $i\in [n]$ for which it holds ${\sf PoS}({\cal G}',\Gamma')\geq\frac{3-\sqrt{3}-2v(2-\sqrt{3})}{2(1-v)}-\epsilon$.
\end{theorem}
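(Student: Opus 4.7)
The plan is to design explicit game instances matching the upper bounds of Theorem~\ref{ubposetero} via a complementary slackness argument with the dual LP solutions exhibited in its proof. Since the game admits an exact potential function $\Phi$, the best pure Nash equilibrium coincides with a global minimizer of $\Phi$; hence a lower bound on the price of stability requires an instance in which the potential-minimizing profile is (essentially) the worst outcome in terms of social cost.

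For $v\in[0,1/2]$, I would consider an $n$-player linear congestion game in which each player $i$ has exactly two strategies, denoted $k_i$ and $o_i$, and the resources (with latencies of the form $\ell_e(x)=\alpha_e x$) are partitioned into a small number of types whose coefficients $\alpha_e$ are tuned so as to realize a primal optimal solution of $LP(K,O)$, where $K=(k_1,\ldots,k_n)$ is the target pure Nash equilibrium and $O=(o_1,\ldots,o_n)$ the target social optimum. The dual variables $\theta$, $x$, $y_i$ from the proof of Theorem~\ref{ubposetero} prescribe, via complementary slackness, which primal constraints must be tight: one picks the $\alpha_e$-values so that the potential constraint (\ref{eqpos}) holds with equality (ensuring $\Phi(K)=\Phi(O)$ up to vanishing terms) and so that each Nash constraint (\ref{eqnash}) holds with equality on the resources for which $\alpha_e>0$.

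The verification then proceeds in three steps. First, use the deviation formula (\ref{eq-dev-etero}) to check that $K$ is a pure Nash equilibrium, which by the symmetry of the construction reduces to a single scalar inequality. Second, evaluate $\Phi$ at both $K$ and $O$ and confirm $\Phi(K)\leq\Phi(O)$, so that some Nash equilibrium reachable by best-response dynamics starting from $O$ has social cost no smaller than ${\sf SUM}(K)$, up to lower-order terms (since $\Phi$ is non-increasing along best-response sequences and $K$ is the intended global minimizer). Third, compute ${\sf SUM}(K)$ and ${\sf SUM}(O)$ in closed form and show that, as the size parameter tends to infinity, the ratio approaches $\frac{(\sqrt{3}+1)(1-v)}{\sqrt{3}-v(\sqrt{3}-1)}$. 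For $v\in[1/2,1]$ an analogous construction yields the bound $\frac{3-\sqrt{3}-2v(2-\sqrt{3})}{2(1-v)}$, with the latency coefficients recomputed from the dual solution exhibited for that regime.

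The main obstacle is designing the resource structure so that all relevant primal constraints can be made simultaneously tight: forcing the potential constraint and the $n$ Nash constraints to hold with near-equality severely restricts the admissible games. Typical constructions overcome this via a symmetric, ring-like layout in which every player's two strategies overlap with those of neighboring players in the same stylized pattern, so that by symmetry the $n$ Nash constraints collapse into a single one and the potential-minimality of $K$ among symmetric profiles becomes immediate. Ruling out asymmetric deviations from $K$ with strictly lower potential is then a finite check enabled by the rigidity of the chosen latencies, and sending the number of copies of the basic gadget to infinity absorbs the $O(1)$ slack into the $\epsilon$ in the statement.
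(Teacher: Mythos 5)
Your plan has a genuine logical gap: the verification you propose does not lower-bound the price of stability. Showing that $K$ is a pure Nash equilibrium and that $\Phi(K)\leq\Phi(O)$ (so that best-response dynamics from $O$ reaches some equilibrium) only exhibits \emph{one} equilibrium of high cost; that is the right shape of argument for a price-of-anarchy lower bound, or it is the machinery used in the \emph{upper} bound on the price of stability, but it says nothing about the best equilibrium. Even if $K$ is the global minimizer of the exact potential, other pure Nash equilibria are local minima of $\Phi$ and may have much smaller social cost, so ``the potential-minimizing profile is the worst outcome'' is not sufficient, and ``ruling out asymmetric profiles with strictly lower potential'' attacks the wrong target. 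To prove ${\sf PoS}\geq c-\epsilon$ you must show that \emph{every} pure Nash equilibrium of the instance has social cost at least $(c-\epsilon)\,{\sf SUM}(O)$. The paper does exactly this by constructing an instance in which $K$ is the \emph{unique} pure Nash equilibrium: in every profile $H\neq K$ some player currently on her $o$-type strategy strictly improves by switching to her $k$-type strategy, so no other equilibrium exists and the PoS coincides with the cost ratio of $K$.

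Two further points where your sketch diverges from what is actually needed. First, the paper's construction is not a symmetric ring gadget: it partitions the players into a set $P$ of $n_1$ players with two strategies and a set $P'$ of $n_2$ ``fixed'' players each having a single strategy $s$; these fixed players congest a shared resource (latency $x$) used by the $k$-strategies (resp.\ $o$-strategies in the $v\in[1/2,1]$ case), and pairwise resources $B_{ij}$ of latency $x/2$ plus private resources of type $A$ with latency $\frac{n_1+2n_2+1-2v}{2(1-v)}x\pm\delta$ complete the design. The small perturbation $\delta$ is what makes the improving deviations strict and hence the equilibrium unique; your tight-complementary-slackness prescription (all Nash constraints exactly tight) would instead tend to produce indifferent players and multiple equilibria, which is precisely what a PoS lower bound must avoid. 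Second, the ratio $\frac{(\sqrt3+1)(1-v)}{\sqrt3-v(\sqrt3-1)}$ (resp.\ $\frac{3-\sqrt3-2v(2-\sqrt3)}{2(1-v)}$) is then obtained by computing ${\sf SUM}(K)/{\sf SUM}(O)$ in closed form and choosing $n_1=(1+\sqrt3)n_2$ with $n_2\to\infty$; using the dual solution of Theorem~\ref{ubposetero} as a heuristic guide for these parameters is fine, but it cannot replace the uniqueness (or all-equilibria-are-costly) argument that your proposal is missing.
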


Note that for $v=1/2$, the price of stability is $1$ which means that, when players are half selfish and half altruistic, there always exists a social optimal solution which is also a pure Nash equilibrium. For $v=0$, that is, when players are totally selfish, we reobtain the well-known bound of $1+1/\sqrt{3}$ on the price of stability of linear congestion games proven by Caragiannis {\em et al.} \cite{CFKKM06}. For $v=1$, the price of stability goes to infinity, i.e., all Nash equilibria may perform extremely bad with respect to the social optimal solution. This implies that totally altruistic players are tremendously harmful in a non-cooperative system, since they yield games in which even the price of stability may be unbounded. Finally, in the restricted altruistic social context, i.e. $v\in [0,1/2]$, when $v$ goes from $0$ to $1/2$, the price of anarchy increases from $5/2$ to $3$, while the price of stability decreases from $1+1/\sqrt{3}$ to $1$. In particular, the increase in the price of anarchy is always compensated by a slightly higher decrease in the price of stability.

\

For a vector $V\in {\cal V}_n$, denote with $\overline{v}$ and $\underline{v}$ the maximum and minimum entry in $V$, respectively. For the price of anarchy, by simply exploiting inequality (\ref{eq-dev-etero}), we get the following dual program

\begin{displaymath}
\begin{array}{ll}
minimize\ \theta\\\vspace{0.1cm}
subject\ to\\\vspace{0.1cm}
\displaystyle\sum_{i:e\in k_i}\left(x_i \left(K_e-\overline{v}\right)\right)-\sum_{i:e\in o_i}\left(x_i \left(K_e+1-\underline{v}\right)\right)+\theta O_e^2 \geq K_e^2, & \ \ \forall e\in E\\\vspace{0.1cm}
x_i\geq 0, & \ \ \forall i\in [n]
\end{array}
\end{displaymath}

\begin{theorem}\label{ubpoaspecial}
For any $n$-player linear congestion game with altruistic social context $({\cal G},\Gamma)$ such that $\Gamma=\Gamma_V$ for some $V\in{\cal V}_n$, it holds ${\sf PoA}({\cal G},\Gamma)\leq\frac{2-\underline{v}}{1-\overline{v}}$ when $\overline{v}\in [1/2,1]$ and ${\sf PoA}({\cal G},\Gamma)\leq\frac{5+2\overline{v}-3\underline{v}}{2-\overline{v}}$ when $\overline{v}\in [0,1/2]$.
\end{theorem}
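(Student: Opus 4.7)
The plan is to apply the primal-dual technique underlying the dual program $DLP(K,O)$ displayed just above the statement. Setting $x_i = x$ to the same non-negative constant for every player $i$ collapses the dual constraint for each resource $e \in E$ to the single inequality
\[
  x\,K_e(K_e-\overline{v}) \;-\; x\,O_e(K_e+1-\underline{v}) \;+\; \theta\,O_e^2 \;\geq\; K_e^2,
\]
where $K_e$ and $O_e$ range over the non-negative integers. If this inequality holds for every such pair with a choice of $x$ and $\theta$ depending only on $\overline{v}$ and $\underline{v}$, then $(x,\theta)$ is a feasible dual solution for every possible pair $(K,O)$, and the Weak Duality Theorem yields ${\sf SUM}(K)/{\sf SUM}(O) \leq \theta$ for every pure Nash equilibrium $K$ and every social optimum $O$; hence ${\sf PoA}({\cal G},\Gamma) \leq \theta$.

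For $\overline{v} \in [1/2,1]$, I would set $\theta = \frac{2-\underline{v}}{1-\overline{v}}$ and pick $x$ to be the value that makes the expression
\[
  (x-1)\,K_e^2 \;-\; x(\overline{v}+O_e)\,K_e \;-\; x(1-\underline{v})\,O_e \;+\; \theta\,O_e^2
\]
non-negative on non-negative integer pairs. I would treat this as a quadratic in $K_e$, compute its discriminant as a function of $O_e$, and argue that it is non-positive once $O_e$ is sufficiently large (typically $O_e\geq 2$); the remaining cases $O_e\in\{0,1\}$ can then be checked by substituting the relevant small non-negative integer values of $K_e$. For $\overline{v}\in[0,1/2]$ the same plan applies with $\theta = \frac{5+2\overline{v}-3\underline{v}}{2-\overline{v}}$ and a correspondingly tuned $x$. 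Both choices collapse, when $\underline{v}=\overline{v}=v$, to the bounds $\frac{2-v}{1-v}$ and $\frac{5-v}{2-v}$ of Caragiannis \emph{et al.}, so the correct $x$ for the heterogeneous case can be guessed from the homogeneous one and then re-verified.

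The main obstacle is to pin down the correct $x$ and to handle the sign of the leading coefficient $x-1$, which can switch depending on whether $\overline{v}$ lies below or above $1/2$. In the regime where the quadratic in $K_e$ opens downward, I would instead recast the inequality as a quadratic in $O_e$ (whose leading coefficient $\theta$ is positive) and run the discriminant argument in the other variable. Once the right $x$ is fixed, the remaining discriminant computation and the low-$O_e$ boundary checks are elementary algebra, in direct analogy with the verifications carried out in the proof of the $17/3$ bound for the restricted altruistic case and in the proof of Theorem~\ref{ubposetero}.
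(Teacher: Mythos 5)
Your framework is exactly the one the paper uses: take the dual program displayed before the statement, set all multipliers $x_i$ equal to a common constant $x$, and verify the resulting per-resource inequality for all non-negative integer pairs $(K_e,O_e)$ via a discriminant argument in $K_e$ for $O_e\geq 2$ plus direct checks for $O_e\in\{0,1\}$. The problem is that your proposal stops exactly where the substance of the theorem begins: you never exhibit $x$, and ``pick $x$ to be the value that makes the expression non-negative'' is circular --- the existence of a single $x$, depending only on $\overline{v}$ and $\underline{v}$ and working uniformly for all $(K_e,O_e)$, is precisely what has to be demonstrated. The paper's choices are $x=\frac{1}{1-\overline{v}}$ with $\theta=\frac{2-\underline{v}}{1-\overline{v}}$ for $\overline{v}\in[1/2,1]$, and $x=\frac{3}{2-\overline{v}}$ with $\theta=\frac{5+2\overline{v}-3\underline{v}}{2-\overline{v}}$ for $\overline{v}\in[0,1/2]$; note that $x$ depends only on $\overline{v}$, which your ``guess it from the homogeneous case'' heuristic cannot deliver, since in the homogeneous case $\overline{v}=\underline{v}$ and the Caragiannis et al.\ bounds do not tell you which occurrences of $v$ should become $\overline{v}$ and which $\underline{v}$.

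Two further points you would have had to settle. First, with the paper's values the leading coefficient $x-1$ equals $\frac{\overline{v}}{1-\overline{v}}$ resp.\ $\frac{1+\overline{v}}{2-\overline{v}}$, hence is positive in both regimes, so your contingency plan of recasting the constraint as a quadratic in $O_e$ is never needed --- but you could not know this without fixing $x$. Second, the verification is not purely mechanical: the discriminant for $O_e\geq 2$ contains a mixed term ($4\underline{v}\,\overline{v}\,O_e(O_e-1)$ in the first regime, $12\underline{v}(1+\overline{v})O_e(O_e-1)$ in the second) that must be eliminated using $\underline{v}\leq\overline{v}$ before the sign analysis closes, and the boundary cases are exactly where the regime hypotheses enter (the check at $O_e=1$ needs $\overline{v}\geq 1/2$ in the first regime, the check at $O_e=0$ needs $\overline{v}\leq 1/2$ in the second). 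Until you produce $x$ and carry out these computations, the stated bounds are asserted rather than proved.
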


Note that, for $\overline{v}=\underline{v}$, we reobtain the upper bounds already proved by Caragiannis {\em et al.} \cite{CKKKP10}. For the general case in which $\overline{v}>\underline{v}$, we have not been able to achieve matching or almost matching lower bounds so far.

\section{Conclusions}
We have focused on the existence and inefficiency of pure Nash equilibria in linear congestion games with altruistic social context in the spirit of the model recently proposed by Keijzer {\em et al.} \cite{DSAB13}.

We have proved that pure Nash equilibria are always guaranteed to exist when the matrix $\Gamma$ defining the social context either has a unitary main diagonal and is symmetric and that this result is tight in the sense that both properties are essential as long as $\Gamma$ does not obey other particular properties. In fact, for the case in which $\Gamma$ is such that $\gamma_{ij}=\gamma_i$ for each $i,j$ with $i\neq j$, existence of pure Nash equilibria can be proved although $\Gamma$ neither has a unitary main diagonal nor is symmetric. Thus, detecting other particular special cases for which such an existential result could be provided is an interesting question.

We have also shown that the price of anarchy for general social contexts is exactly $17/3$ and that this bounds holds even for the case of load balancing games. When compared with the results of Keijzer {\em et al.} \cite{DSAB13}, this gives rise to two important questions. The first one is to determine the exact price of anarchy of coarse correlated equilibria. Is this the same as the one of pure Nash equilibria, or is there a separation result, showing that when going from pure Nash equilibria to coarse correlated equilibria, passing through mixed Nash equilibria and correlated equilibria, at a certain point there must occur an increase in the price of anarchy? The second one, is to determine the exact price of anarchy for the basic case of load balancing games with identical resources.

As to the price of stability, instead, the main issue is to close the gap between our upper bound of $2$ and the lower bound of $1+1/\sqrt{2}$. To this direction, our intuition is that the upper bound is tight.

Finally, for the special case in which $\Gamma$ is such that $\gamma_{ij}=\gamma_i$ for each $i,j$ with $i\neq j$, we have only given upper bounds on the price of anarchy. Matching or nearly matching lower bounds, as well as bounds on the price of stability are still missing.

\newpage

\section{Appendix}

\subsection{Omitted Material from Section \ref{sec-existence}}

{\bf Proof of Theorem \ref{ne1}.}
Let $\cal G$ be a linear congestion game with $3$ players and $13$ resources such that $S_1=\left\{\{e_1,e_4,e_{13}\},\{e_2,e_3,e_5,e_6\}\right\}$, $S_2=\left\{\{e_4,e_5,e_6,e_9,e_{10},e_{13}\},\{e_3,e_7,e_8\}\right\}$, $S_3=\left\{\{e_1,e_4,e_5,e_7,e_{10},e_{12}\},\{e_6,e_8,e_{11},e_{13}\}\right\}$, $\alpha_1=9$, $\alpha_2=7$, $\alpha_3=16$, $\alpha_4=25$, $\alpha_5=14$, $\alpha_6=32$, $\alpha_7=363$, $\alpha_8=87$, $\alpha_9=383$, $\alpha_{10}=318$, $\alpha_{11}=1047$, $\alpha_{12}=160$, $\alpha_{13}=31$ and $\beta_e=0$ for each $e\in [13]$. The matrix $\Gamma$ is such that $\gamma_{11}=\gamma_{33}=1$, $\gamma_{22}=0$, $\gamma_{12}=\gamma_{21}=\frac{10}{211}$, $\gamma_{13}=\gamma_{31}=\frac{2}{53}$ and $\gamma_{23}=\gamma_{32}=\frac{1}{9}$. Hence, $\Gamma$ is symmetric, but the altruistic social context is not restricted.

It is not difficult to check by inspection that each of the possible eight strategy profiles is not a pure Nash equilibrium. To this aim, we use a triple $(x_1,x_2,x_3)$, with $x_i\in\{1,2\}$ for each $i\in [3]$, to denote the strategy profile in which player $i$ chooses her first or second strategy depending on whether $x_i=1$ or $x_i=2$, respectively. In the following table it is shown that no pure Nash equilibria exist in $({\cal G},\Gamma)$.

\begin{tabular}{|c|c|c|c|c|c|}
  \hline
  \ Profile $S$\  & \ $\widehat{c}_1(S)$\  & \ $\widehat{c}_2(S)$\  & \ $\widehat{c}_3(S)$\  & \ Migrating Player\  & \ New Profile\  \\\hline\hline
  $(1,1,1)$ & $>260$ & - & - & 1 & $(2,1,1)$ \\\hline
  $(2,1,1)$ & $<243$ & $>146$ & $<1398.87$ & 2 & $(2,2,1)$ \\\hline
  $(2,2,1)$ & $>186.82$ & $<146$ & - & 1 & $(1,2,1)$ \\\hline
  $(1,2,1)$ & $<186.82$ & - & $>1381$ & 3 & $(1,2,2)$ \\\hline
  $(1,2,2)$ & - & $>150.66$ & $<1381$ & 2 & $(1,1,2)$ \\\hline
  $(1,1,2)$ & $>244$ & $<150.65$ & - & 1 & $(2,1,2)$ \\\hline
  $(2,1,2)$ & $<239$ & $<151$ & $>1398.88$ & 3 & $(2,1,1)$ \\\hline
  $(2,2,2)$ & - & $>151$ & - & 2 & $(2,1,2)$ \\
  \hline
\end{tabular}

\

\noindent {\bf Proof of Theorem \ref{ne2}.}
Let $\cal G$ be a linear congestion game with $3$ players and $9$ resources such that $S_1=\left\{\{e_3\},\{e_1,e_2\}\right\}$, $S_2=\left\{\{e_3,e_6,e_7\},\{e_2,e_4,e_5\}\right\}$, $S_3=\left\{\{e_3,e_4,e_7,e_9\},\{e_5,e_8\}\right\}$, $\alpha_1=10$, $\alpha_2=1$, $\alpha_3=4$, $\alpha_4=392$, $\alpha_5=98$, $\alpha_6=384$, $\alpha_7=294$, $\alpha_8=1052$, $\alpha_9=160$ and $\beta_e=0$ for each $e\in [9]$. The matrix $\Gamma$ is such that $\gamma_{ij}=1$ for each $i,j\in [3]$ except for $\gamma_{21}=\gamma_{32}=0$. Hence, $({\cal G},\Gamma)$ is a linear congestion game with restricted altruistic social context but $\Gamma$ is not symmetric.

It is not difficult to check by inspection that each of the possible eight strategy profiles is not a pure Nash equilibrium. To this aim, we use a triple $(x_1,x_2,x_3)$, with $x_i\in\{1,2\}$ for each $i\in [3]$, to denote the strategy profile in which player $i$ chooses her first or second strategy depending on whether $x_i=1$ or $x_i=2$, respectively. In the following table it is shown that no pure Nash equilibria exist in $({\cal G},\Gamma)$.

\begin{tabular}{|c|c|c|c|c|c|}
  \hline
  \ Profile $S$\  & \ $\widehat{c}_1(S)$\  & \ $\widehat{c}_2(S)$\  & \ $\widehat{c}_3(S)$\  & \ Migrating Player\  & \ New Profile\  \\\hline\hline
  $(1,1,1)$ & 2148 & - & - & 1 & $(2,1,1)$ \\\hline
  $(2,1,1)$ & 2139 & 2128 & 1159 & 2 & $(2,2,1)$ \\\hline
  $(2,2,1)$ & 2138 & 2126 & - & 1 & $(1,2,1)$ \\\hline
  $(1,2,1)$ & 2137 & - & 1254 & 3 & $(1,2,2)$ \\\hline
  $(1,2,2)$ & - & 1837 & 1252 & 2 & $(1,1,2)$ \\\hline
  $(1,1,2)$ & 1844 & 1836 & - & 1 & $(2,1,2)$ \\\hline
  $(2,1,2)$ & 1843 & 1832 & 1161 & 3 & $(2,1,1)$ \\\hline
  $(2,2,2)$ & - & 1838 & - & 2 & $(2,1,2)$ \\
  \hline
\end{tabular}

\subsection{Omitted Material from Section \ref{sec-bounds}}

\noindent{\bf Proof of Theorem \ref{lbpoa}.}
We use the notion of {\em game graph} introduced by Caragiannis {\em et al.} \cite{CFKKM06} to describe load balancing games in which each player has exactly two possible strategies: the one played at the social optimum and the one played at the worst pure Nash equilibrium. Each node in the graph models a resource, while each edge $\{i,j\}$ corresponds to a player who can only choose between one of the two resources $i$ and $j$.

We define a game graph $T$ consisting in a tree having $2h+1$ levels, numbered from $0$ to $2h$. Level $0$ corresponds to the root and level $2h$ to leaves. Each node at level $i$, with $0\leq i\leq h-1$, has three children, while each node at level $i$, with $h\leq i\leq 2h-1$, has two children. Hence, $T$ is complete ternary tree of height $h$ whose leaves are the roots of complete binary trees of height $h$. This implies that each level $i$, with $0\leq i\leq h$ has $3^i$ nodes, while each level $i$, with $h+1\leq i\leq 2h$, has $3^h 2^{i-h}$ nodes. The latency function of each node at level $i$ is of type $\alpha_i x$, where $\alpha_i=\left(\frac{3}{7}\right)^i$ for each $0\leq i\leq h-1$, $\alpha_i=\frac 3 5\left(\frac{3}{7}\right)^{h-1}\left(\frac{2}{5}\right)^{i-h}$ for each $h\leq i\leq 2h-1$ and $\alpha_{2h}=\frac 6 5\left(\frac{6}{35}\right)^{h-1}$. The matrix $\Gamma$ is defined as follows: each player $j=\{u,v\}$ such that $u$ is the parent of $v$ in $T$ cares of the player corresponding to the edge connecting node $u$ to its parent (we call such a player the {\em parent} of $j$, whenever it exists) and of all the players corresponding to the edges connecting $v$ to its children (we call these players the {\em children} of $j$, whenever they exist). By ``cares", we mean that the corresponding entry in the induced matrix $\Gamma$ is $1$, otherwise, it is $0$. It is easy to see that $\Gamma$ is a symmetric boolean matrix.

We show that the strategy profile $K$ in which each player selects the resource closest to the root is a pure Nash equilibrium for $(T,\Gamma)$.

Consider a player $j$ using a resource $k_j$ belonging to level $i$, with $0\leq i\leq h-2$. Player $j$ is sharing $k_j$ with her two siblings, thus $c_j(K)=3\left(\frac{3}{7}\right)^i$. The three children of player $j$ are sharing the resource $o_j$ belonging to level $i+1$, thus each of them is paying a cost of $3\left(\frac{3}{7}\right)^{i+1}$. Finally, assume that the parent of player $j$ is paying a cost of $\delta$ (with $\delta=0$ when such a player does not exist). It follows that $\widehat{c}_j(K)=3\left(\frac{3}{7}\right)^i+9\left(\frac{3}{7}\right)^{i+1}+\delta=\frac{48}{7}\left(\frac{3}{7}\right)^i+\delta$. If player $j$ migrates to the other strategy $o_j$, thus joining her three children, her cost becomes $\widehat{c}_j(K_{-j}\diamond o_j)=16\left(\frac{3}{7}\right)^{i+1}+\delta=\frac{48}{7}\left(\frac{3}{7}\right)^i+\delta$. Thus, player $j$ has no incentive to deviate from $K$.

Consider a player $j$ using a resource $k_j$ belonging to level $h-1$. Player $j$ is sharing $k_j$ with her two siblings, thus $c_j(K)=3\left(\frac{3}{7}\right)^{h-1}$. The two children of player $j$ are sharing the resource $o_j$ belonging to level $h$, thus each of them is paying a cost of $2\frac 3 5\left(\frac{3}{7}\right)^{h-1}$. Finally, assume that the parent of player $j$ is paying a cost of $\delta$. It follows that $\widehat{c}_j(K)=3\left(\frac{3}{7}\right)^{h-1}+\frac{12}{5}\left(\frac{3}{7}\right)^{h-1}+\delta=\frac{27}{5}\left(\frac{3}{7}\right)^{h-1}+\delta$. If player $j$ migrates to the other strategy $o_j$, thus joining her two children, her cost becomes $\widehat{c}_j(K_{-j}\diamond o_j)=\frac{27}{5}\left(\frac{3}{7}\right)^{h-1}+\delta$. Thus, player $j$ has no incentive to deviate from $K$.

Consider a player $j$ using a resource $k_j$ belonging to level $i$, with $h\leq i\leq 2h-2$. Player $j$ is sharing $k_j$ with her sibling, thus $c_j(K)=2\frac 3 5\left(\frac{3}{7}\right)^{h-1}\left(\frac{2}{5}\right)^{i-h}$. The two children of player $j$ are sharing the resource $o_j$ belonging to level $i+1$, thus each of them is paying a cost of $2\frac 3 5\left(\frac{3}{7}\right)^{h-1}\left(\frac{2}{5}\right)^{i-h+1}$. Finally, assume that the parent of player $j$ is paying a cost of $\delta$. It follows that $\widehat{c}_j(K)=\frac 6 5\left(\frac{3}{7}\right)^{h-1}\left(\frac{2}{5}\right)^{i-h}+\frac{12}{5}\left(\frac{3}{7}\right)^{h-1}\left(\frac{2}{5}\right)^{i-h+1}+\delta
=\frac{54}{25}\left(\frac{3}{7}\right)^{h-1}\left(\frac{2}{5}\right)^{i-h}+\delta$. If player $j$ migrates to the other strategy $o_j$, thus joining her two children, her cost becomes $\widehat{c}_j(K_{-j}\diamond o_j)=9\frac 3 5\left(\frac{3}{7}\right)^{h-1}\left(\frac{2}{5}\right)^{i-h+1}+\delta=\frac{54}{25}\left(\frac{3}{7}\right)^{h-1}\left(\frac{2}{5}\right)^{i-h}+\delta$. Thus, player $j$ has no incentive to deviate from $K$.

Finally, consider a player $j$ using a resource $k_j$ belonging to level $2h-1$. Player $j$ is sharing $k_j$ with her sibling, thus $c_j(K)=2\frac 3 5\left(\frac{6}{35}\right)^{h-1}$. Player $j$ has no children, thus assuming that her parent is paying a cost of $\delta$, it follows that $\widehat{c}_j(K)=\frac 6 5\left(\frac{6}{35}\right)^{h-1}+\delta$. If player $j$ migrates to the other strategy $o_j$, used by no players in $K$, her cost becomes $\widehat{c}_j(K_{-j}\diamond o_j)=\frac 6 5\left(\frac{6}{35}\right)^{h-1}+\delta$. Thus, player $j$ has no incentive to deviate from $K$.

In order to bound the value of ${\sf SUM}(K)$, note that each resource from level $0$ to $h-1$ is used by three players, while each resource from level $h$ to $2h-1$ is used by two players. Thus, we obtain
\begin{eqnarray*}
{\sf SUM}(K) & = & 9\sum_{i=0}^{h-1}\left(3^i\left(\frac 3 7\right)^i\right)+4\sum_{i=h}^{2h-1}\left(3^h 2^{i-h}\frac 3 5\left(\frac 3 7\right)^{h-1}\left(\frac 2 5\right)^{i-h}\right)\\
& = & \frac{153}{2}\left(\frac 9 7\right)^{h-1}-\frac{36}{5}\left(\frac{36}{35}\right)^{h-1}-\frac{63}{2}.
\end{eqnarray*}

We upper bound the value of the social optimum with the social value of the profile $O$ in which all players choose the resource closest to the leaves. Note that, in this case, each resource from level $1$ to $2h$ is used by one player. Thus, we obtain
\begin{eqnarray*}
{\sf SUM}(O) & = & \sum_{i=1}^{h-1}\left(3^i\left(\frac 3 7\right)^i\right)+\sum_{i=h}^{2h-1}\left(3^h 2^{i-h}\frac 3 5\left(\frac 3 7\right)^{h-1}\left(\frac 2 5\right)^{i-h}\right)+6^h\frac 6 5\left(\frac{6}{35}\right)^{h-1}\\
& = & \frac{27}{2}\left(\frac 9 7\right)^{h-1}-\frac{9}{2}.
\end{eqnarray*}

Hence, for any $\epsilon>0$, there exists a sufficiently big $h$ such that $${\sf PoA}(T,\Gamma)\geq\frac{\frac{153}{2}\left(\frac 9 7\right)^{h-1}}{\frac{27}{2}\left(\frac 9 7\right)^{h-1}}-\epsilon=\frac{17}{3}-\epsilon.$$

\

\noindent{\bf Proof of Theorem \ref{lbpos}.}
For any fixed $\epsilon>0$, $\cal G$ is defined as follows. The $n$ players are partitioned into three subsets $P$, $P'$ and $P''$ such that $|P|=|P'|=n_1$ and $|P''|=n_2$ and there are $2(n_1^2+n_1+1)$ resources. Each player $i$, with $i\in P$, has two strategies, denoted with $k_i$ and $o_i$, each player $i$, with $i\in P'$, has two strategies, denoted with $k'_i$ and $o'_i$, and each player $i$, with $i\in P''$, has a unique strategy denoted with $s$. The resources are divided into six different types: there are $n_1$ resources of types $A$ and $B$, denoted with $A_i$ and $B_i$ for each $i\in [n_1]$, $n_1^2$ resources of types $C$ and $D$, denoted with $C_{ij}$ and $D_{ij}$ for each $i,j\in [n_1]$, and $1$ resource of types $E$ and $F$. Resource $A_i$ only belongs to $o_i$ for each $i\in P$, resource $B_i$ only belongs to $o'_i$ for each $i\in P'$, resource $C_{ij}$ belongs only to $k_i$ and $o'_j$ for each $i\in P$ and $j\in P'$, resource $D_{ij}$ belongs only to $k'_i$ and $o_j$ for each $i\in P'$ and $j\in P$, resource $E$ belongs only to $s$ and to $k_i$ for each $i\in P$ and resource $F$ belongs only to $s$ and to $k'_i$ for each $i\in P'$. Finally, each resource of type $A$ and $B$ has latency $\ell_A(x)=\ell_B(x)=(n_1+2n_2)x+\delta$, where $\delta>0$ is arbitrarily small, each resource of type $C$ and $D$ has latency $\ell_C(x)=\ell_D(x)=\frac x 2$ and the resources of type $E$ and $F$ have latency $\ell_E(x)=\ell_F(x)=2x$.

The matrix $\Gamma$ is such that $\gamma_{ii}=1$ for each $i\in [n]$, $\gamma_{ij}=1$ if and only if $i\in P$ and $j\in P'$ or $i\in P'$ and $j\in P$, while $\gamma_{ij}=0$ otherwise. Hence, $\Gamma$ is a boolean symmetric matrix which defines a restricted altruistic social context.

Note that the congestion of each resource of type $C$ and $D$ in any possible strategy profile is a value in $\{0,1,2\}$. In particular, for any strategy profile $S$, it holds
\begin{displaymath}
n_{C_{ij}}(S)=\left\{
\begin{array}{ll}
0 & \textrm{ if $i\in P$ chooses $o_i$ and $j\in P'$ chooses $k'_j$,}\\
2 & \textrm{ if $i\in P$ chooses $k_i$ and $j\in P'$ chooses $o'_j$,}\\
1 & \textrm{ otherwise.}\\
\end{array}\right.
\end{displaymath}
A similar characterization holds for $n_{D_{ij}}(S)$ by swapping the roles of the players in $P$ and $P'$.

We show that the strategy profile $K=\left((k_i)_{i\in P},(k'_i)_{i\in P'},(s)_{i\in P''}\right)$ is the unique Nash equilibrium of $({\cal G},\Gamma)$. Let $H$ be any strategy profile in which exactly $h\geq 1$ players in $P$ choose strategy $k$ (and, so, $n_1-h$ of them choose strategy $o$) and exactly $h'\geq 1$ players in $P'$ choose strategy $k'$ (and, so, $n_1-h'$ of them choose strategy $o'$). Since the players are symmetric, as well as the resources, all the players choosing the same type of strategy pay the same cost in $H$. Denote with $cost_k(H)$ the cost of any of the players in $P$ choosing strategy $k$ in $H$ and with $cost_o(H)$ the cost of any of the players in $P$ choosing strategy $o$ in $H$. Similarly, we denote with $cost'_k(H)$ the cost of any of the players in $P'$ choosing strategy $k'$ in $H$ and with $cost'_o(H)$ the cost of any of the players in $P'$ choosing strategy $o'$ in $H$.

Let us compute $cost_k(H)$. Without loss of generality, we can suppose that the first $h$ players in $P$ and the first $h'$ players in $P'$ are those choosing strategies of types $k$ and $k'$, respectively. Thus, we can focus on the cost paid by the first player belonging to $P$ in $H$. She is using resources $C_{1j}$ for each $1\leq j\leq n_1$ and resource $E$. The congestion of the latter is $n_2+h$. By exploiting the characterization of $n_{C_{ij}}(S)$ given above, we have that, of the $n_1$ resources of type $C$ used by the player, $n_1-h'$ of them have congestion $2$ (since there are $n_1-h'$ players in $P'$ using the strategy of type $o'$) and $h'$ of them have congestion $1$ (since there cannot be resources with congestion equal to $0$). Thus, it holds $cost_k(H)=\frac 1 2 (2n_1-h')+2(h+n_2)$.

Let us compute $cost_o(H)$. Again, we can focus on the cost paid by the last player belonging to $P$ in $H$. She is using resources $D_{i n_1}$ for each $i\in [n_1]$ and resource $A_{n_1}$. The congestion of the latter is $1$. By exploiting the characterization of $n_{D_{ij}}(S)$ given above, we have that, of the $n_1$ resources of type $D$ used by the player, $h'$ of them have congestion $2$ (since there are $h'$ in $P'$ using the strategy of type $k'$) and $n_1-h'$ of them have congestion $1$ (since there cannot be resources with congestion equal to $0$). Thus, it holds $cost_o(H)=n_1+2n_2+\delta+\frac 1 2 (n_1+h')$.

With a similar analysis, we obtain $cost'_k(H)=\frac 1 2 (2n_1-h)+2(h'+n_2)$ and $cost'_o(H)=n_1+2n_2+\delta+\frac 1 2 (n_1+h)$.

By the definition of $\Gamma$, each player in $P$ wants to minimize her cost plus the sum of the costs of all the players in $P'$. Thus, we get
$$\widehat{cost}_k(H)=\frac 1 2 (2n_1-h')+2(h+n_2)+h'\cdot cost'_k(H)+(n_1-h')\cdot cost'_o(H)$$ and $$\widehat{cost}_o(H)=n_1+2n_2+\delta+\frac 1 2 (n_1+h')+h'\cdot cost'_k(H)+(n_1-h')\cdot cost'_o(H).$$
Similarly, we obtain $\widehat{cost}'_k(H)$ and $\widehat{cost}'_o(H)$.

Let $H_1$ be the strategy profile obtained from $H$ when player $i\in P$ changes her strategy from $k_i$ to $o_i$, i.e., the profile in which the number of player in $P$ using the strategy of type $k$ is $h-1$. Note that, as long as $h\leq h'$, it holds $\widehat{cost}_k(H)<\widehat{cost}_o(H_1)$. Similarly, it is possible to establish that, as long as $h\geq h'$, it holds $\widehat{cost}'_k(H)<\widehat{cost}'_o(H_2)$, where $H_2$ is the strategy profile obtained from $H$ when player $i\in P'$ changes her strategy from $k'_i$ to $o'_i$, i.e., the profile in which the number of player in $P'$ using the strategy of type $k'$ is $h'-1$. Thus, in each strategy profile $H\neq K$, there always exists a player using a strategy of type $o$ or $o'$ who can improve by choosing the strategy of type $k$ or $k'$. This shows that $K$ is the only pure Nash equilibrium for $({\cal G},\Gamma)$.

Let us now compare ${\sf SUM}(K)$ with ${\sf SUM}(O)$, where $O=\left((o_i)_{i\in P},(o'_i)_{i\in P'},(s)_{i\in P''}\right)$. To this aim, note that each resource of type $A$ and $B$ has congestion $0$ in $K$ and $1$ in $O$, each resource of type $C$ and $D$ has congestion $1$ both in $K$ and $O$ and the resources of type $E$ and $F$ have congestion $n_1+n_2$ in $K$ and $n_2$ in $O$. Thus, we obtain
$${\sf PoS}({\cal G},\Gamma) = \frac{n_1^2+4(n_1+n_2)^2}{2n_1\left(n_1+2n_2+\delta\right)+n_1^2+4n_2^2}.$$

By choosing $n_1=2(1+\sqrt{2})n_2$ and $n_2$ sufficiently big, we get ${\sf PoS}({\cal G},\Gamma)\geq 1+1/\sqrt{2}-\epsilon$.

\subsection{Omitted Material from Section \ref{sec-special}}

\noindent{\bf Proof of Theorem \ref{ubposetero}.}
For $v\in [0,1/2]$, set $\theta=\frac{(\sqrt{3}+1)(1-v)}{\sqrt{3}-v(\sqrt{3}-1)}$, $x=\frac{3-2(1+\sqrt{3})v^2-(3-\sqrt{3})v}{2(2v^2-6v+3)}$ and $y_i=\frac{2(1+\sqrt{3})v^2-(1+3\sqrt{3})v+\sqrt{3}}{2v^2-6v+3}$ for each $i\in [n]$. With these values, the dual constraint becomes $(2v-1)f(K_e,O_e)\geq 0$, with
\begin{eqnarray*}
f(K_e,O_e) & := & K_e^2((\sqrt{3}-1)v+3-2\sqrt{3})-K_e(2O_e-\sqrt{3})((1+\sqrt{3})v-\sqrt{3})\\
& & \ \ +O_e(O_e-1)((5+3\sqrt{3})v-3-2\sqrt{3})
\end{eqnarray*}
which, for any $v\in [0,1/2]$, is non-negative when $f(K_e,O_e)\leq 0$.
Note that the discriminant of the equation $f(K_e,O_e)=0$, when solved for $K_e$, is $$v^2(18(2+\sqrt{3})-(32+16\sqrt{3})O_e)+v((48+16\sqrt{3})O_e-18(3+\sqrt{3}))-3(8O_e-9).$$ Such a quantity is always non-positive when $O_e\geq 2$, hence we are just left to check the cases of $O_e\in\{0,1\}$. For $O_e=0$, $f(K_e,O_e)\leq 0$ becomes $$K_e(v((\sqrt{3}-1)K_e+3+\sqrt{3})+(3-2\sqrt{3})K_e-3)\leq 0$$ which is always verified when $K_e\geq\frac{\sqrt{3}(\sqrt{3}-(1+\sqrt{3})v)}{(\sqrt{3}-1)v+3-2\sqrt{3}}$. Since the right-hand side of this inequality is never positive for any $v\in [0,1/2]$, we are done. For $O_e=1$, $f(K_e,O_e)\leq 0$ becomes $$K_e(v((\sqrt{3}-1)K_e+1-\sqrt{3})+(3-2\sqrt{3})K_e+2\sqrt{3}-3)\leq 0$$ which is always verified for any non-negative integer $K_e$.

For $v\in [1/2,1]$, set $\theta=\frac{3-\sqrt{3}-2v(2-\sqrt{3})}{2(1-v)}$, $x=\frac{1+2v-\sqrt{3}(2v-1)}{4(1-v)}$ and $y_i=\frac{(2v-1)(\sqrt{3}-1)}{2(1-v)}$ for each $i\in [n]$. (Note that, for $v=1$, the variables $\theta$, $x$ and $y_i$ are not correctly defined. In fact, in such a case, the price of stability is unbounded which implies that the dual program is unfeasible). With these values, the dual constraint becomes $\frac{1-2v}{v-1}f(K_e,O_e)\geq 0$, with
$$f(K_e,O_e):=K_e^2(1+\sqrt{3})-K_e(2O_e(\sqrt{3}-1)+1+\sqrt{3})+O_e(O_e(3\sqrt{3}-5)+3-\sqrt{3})$$ which, for any $v\in [1/2,1]$, is non-negative when $f(K_e,O_e)\geq 0$.
Note that the discriminant of the equation $f(K_e,O_e)=0$, when solved for $K_e$, is $$4O_e(1-\sqrt{3})+2+\sqrt{3}.$$ Such a quantity is always non-positive when $O_e\geq 2$, hence we are just left to check the cases of $O_e\in\{0,1\}$. For $O_e=0$, $f(K_e,O_e)\geq 0$ becomes $$K_e(K_e-1)\geq 0$$ which is always verified for any non-negative integer $K_e$. For $O_e=1$, $f(K_e,O_e)\geq 0$ becomes $$K_e(1+\sqrt{3})-K_e(3\sqrt{3}-1)+2\sqrt{3}-2\geq 0$$ which is always verified for any non-negative integer $K_e$.

\

\noindent{\bf Proof of Theorem \ref{lbposspecial}.}
For any fixed $\epsilon>0$ and $v\in [0,1/2]$, $\cal G$ is defined as follows. The $n$ players are partitioned into two subsets $P$ and $P'$ such that $|P|=n_1$ and $|P'|=n_2$ and there are $n_1^2+1$ resources. Each player $i$, with $i\in P$, has two strategies, denoted with $k_i$ and $o_i$, while each player in $P'$ has a unique strategy denoted with $s$. The resources are divided into three different types: there are $n_1$ resources of type $A$, denoted with $A_i$ for each $i\in [n_1]$, $n_1(n_1-1)$ resources of type $B$, denoted with $B_{ij}$ for each $i,j\in [n_1]$ with $i\neq j$, and $1$ resource of type $C$. Resource $A_i$ only belongs to $o_i$ for each $i\in P$, resource $B_{ij}$ belongs only to $k_i$ and $o_j$ for each $i,j\in P$ with $i\neq j$ and resource $C$ belongs only to $s$ and to $k_i$ for each $i\in P$. Finally, each resource of type $A$ has latency $\ell_A(x)=\frac{n_1+2n_2+1-2v}{2(1-v)}x+\delta$, where $\delta>0$ is arbitrarily small, each resource of type $B$ has latency $\ell_B(x)=\frac x 2$ and the resources of type $C$ has latency $\ell_C(x)=x$.

Note that the congestion of each resource of type $B$ in any possible strategy profile is a value in $\{0,1,2\}$. In particular, for any strategy profile $S$, it holds
\begin{displaymath}
n_{B_{ij}}(S)=\left\{
\begin{array}{ll}
0 & \textrm{ if $i$ chooses $o_i$ and $j$ chooses $k_j$,}\\
2 & \textrm{ if $i$ chooses $k_i$ and $j$ chooses $o_j$,}\\
1 & \textrm{ otherwise.}\\
\end{array}\right.
\end{displaymath}

We show that the strategy profile $K=\left((k_i)_{i\in P},(s)_{i\in P'}\right)$ is the unique Nash equilibrium of $({\cal G},\Gamma)$. Let $H$ be any strategy profile in which exactly $h\geq 1$ players in $P$ choose strategy $k$ (and, so, $n_1-h$ of them choose strategy $o$). Since the players are symmetric, as well as the resources, all the players choosing the same type of strategy pay the same cost in $H$. Denote with $cost_k(H)$ the cost of any of the players in $P$ choosing strategy $k$ in $H$ and with $cost_o(H)$ the cost of any of the players in $P$ choosing strategy $o$ in $H$.

Let us compute $cost_k(H)$. Without loss of generality, we can suppose that the first $h$ players in $P$ are those choosing strategies of type $k$. Thus, we can focus on the cost paid by the first player belonging to $P$ in $H$. She is using resources $B_{1j}$ for each $1\leq j\leq n_1$ with $j\neq 1$ and resource $C$. The congestion of the latter is $n_2+h$. By exploiting the characterization of $n_{B_{ij}}(S)$ given above, we have that, of the $n_1-1$ resources of type $B$ used by the player, $n_1-h$ of them have congestion $2$ (since there are $n_1-h$ players in $P$ using the strategy of type $o$) and $h-1$ of them have congestion $1$ (since there cannot be resources with congestion equal to $0$). Thus, it holds $cost_k(H)=\frac 1 2 (2n_1-h-1)+h+n_2$.

Let us compute $cost_o(H)$. Again, we can focus on the cost paid by the last player belonging to $P$ in $H$. She is using resources $B_{i n_1}$ for each $i\in [n_1-1]$ and resource $A_{n_1}$. The congestion of the latter is $1$. By exploiting the characterization of $n_{B_{ij}}(S)$ given above, we have that, of the $n_1-1$ resources of type $B$ used by the player, $h$ of them have congestion $2$ (since there are $h$ players in $P$ using the strategy of type $k$) and $n_1-h-1$ of them have congestion $1$ (since there cannot be resources with congestion equal to $0$). Thus, it holds $cost_o(H)=\frac{n_1+2n_2+1-2v}{2(1-v)}+\delta+\frac 1 2 (n_1+h-1)$.

By the definition of $\Gamma$, each player in $P$ wants to minimize $(1-v)$ times her cost plus the sum of the costs of all the players in the game multiplied by $v$. Thus, we get
\begin{displaymath}
\begin{array}{cl}
& \widehat{cost}_k(H)\\
= & (1-v)cost_k(H)+v\left((h-1)cost_k(H)+(n_1-h)cost_o(H)+n_2(n_2+h)\right)
\end{array}
\end{displaymath}
and
\begin{displaymath}
\begin{array}{cl}
& \widehat{cost}_o(H)\\
= & (1-v)cost_o(H)+v\left(h\cdot cost_k(H)+(n_1-h-1)cost_o(H)+n_2(n_2+h)\right).
\end{array}
\end{displaymath}

Let $H'$ be the strategy profile obtained from $H$ when a player $i\in P$ changes her strategy from $k_i$ to $o_i$, i.e., the profile in which the number of players in $P$ using the strategy of type $k$ is $h-1$. Note that it holds $\widehat{cost}_k(H)<\widehat{cost}_o(H')$. Thus, in each strategy profile $H\neq K$, there always exists a player using a strategy of type $o$ who can improve by choosing the strategy of type $k$. This shows that $K$ is the only pure Nash equilibrium for $({\cal G},\Gamma)$.

Let us now compare ${\sf SUM}(K)$ with ${\sf SUM}(O)$, where $O=\left((o_i)_{i\in P},(s)_{i\in P'}\right)$. To this aim, note that each resource of type $A$ has congestion $0$ in $K$ and $1$ in $O$, each resource of type $B$ has congestion $1$ both in $K$ and $O$ and the resource of type $C$ has congestion $n_1+n_2$ in $K$ and $n_2$ in $O$. Thus, we obtain
$${\sf PoS}({\cal G},\Gamma) = \frac{\frac 1 2 n_1(n_1-1)+(n_1+n_2)^2}{\left(\frac{n_1+2n_2+1-2v}{2(1-v)}+\delta\right)n+\frac 1 2 n_1(n_1-1)+n_2^2}.$$

By choosing $n_1=(1+\sqrt{3})n_2$ and $n_2$ sufficiently big, we get ${\sf PoS}({\cal G},\Gamma)\geq \frac{(\sqrt{3}+1)(1-v)}{\sqrt{3}-v(\sqrt{3}-1)}-\epsilon$.

\

For any fixed $\epsilon>0$ and $v\in [1/2,1]$, ${\cal G}'$ is defined as follows. The $n$ players are partitioned into two subsets $P$ and $P'$ such that $|P|=n_1$ and $|P'|=n_2$ and there are $n_1^2+1$ resources. Each player $i$, with $i\in P$, has two strategies, denoted with $k_i$ and $o_i$, while each player in $P'$ has a unique strategy denoted with $s$. The resources are divided into three different types: there are $n_1$ resources of type $A$, denoted with $A_i$ for each $i\in [n_1]$, $n_1(n_1-1)$ resources of type $B$, denoted with $B_{ij}$ for each $i,j\in [n_1]$ with $i\neq j$, and $1$ resource of type $C$. Resource $A_i$ only belongs to $k_i$ for each $i\in P$, resource $B_{ij}$ belongs only to $k_i$ and $o_j$ for each $i,j\in P$ with $i\neq j$ and resource $C$ belongs only to $s$ and to $o_i$ for each $i\in P$. Finally, each resource of type $A$ has latency $\ell_A(x)=\frac{n_1+2n_2+1-2v}{2(1-v)}x-\delta$, where $\delta>0$ is arbitrarily small, each resource of type $B$ has latency $\ell_B(x)=\frac x 2$ and the resources of type $C$ has latency $\ell_C(x)=x$.

Note that the congestion of each resource of type $B$ in any possible strategy profile is a value in $\{0,1,2\}$. In particular, for any strategy profile $S$, it holds
\begin{displaymath}
n_{B_{ij}}(S)=\left\{
\begin{array}{ll}
0 & \textrm{ if $i$ chooses $o_i$ and $j$ chooses $k_j$,}\\
2 & \textrm{ if $i$ chooses $k_i$ and $j$ chooses $o_j$,}\\
1 & \textrm{ otherwise.}\\
\end{array}\right.
\end{displaymath}

We show that the strategy profile $K=\left((k_i)_{i\in P},(s)_{i\in P'}\right)$ is the unique Nash equilibrium of $({\cal G},\Gamma)$. Let $H$ be any strategy profile in which exactly $h\geq 1$ players in $P$ choose strategy $k$ (and, so, $n_1-h$ of them choose strategy $o$). Since the players are symmetric, as well as the resources, all the players choosing the same type of strategy pay the same cost in $H$. Denote with $cost_k(H)$ the cost of any of the players in $P$ choosing strategy $k$ in $H$ and with $cost_o(H)$ the cost of any of the players in $P$ choosing strategy $o$ in $H$.

Let us compute $cost_k(H)$. Without loss of generality, we can suppose that the first $h$ players in $P$ are those choosing strategies of type $k$. Thus, we can focus on the cost paid by the first player belonging to $P$ in $H$. She is using resources $B_{1j}$ for each $1\leq j\leq n_1$ with $j\neq 1$ and resource $A_1$. The congestion of the latter is $1$. By exploiting the characterization of $n_{B_{ij}}(S)$ given above, we have that, of the $n_1-1$ resources of type $B$ used by the player, $n_1-h$ of them have congestion $2$ (since there are $n_1-h$ players in $P$ using the strategy of type $o$) and $h-1$ of them have congestion $1$ (since there cannot be resources with congestion equal to $0$). Thus, it holds $cost_k(H)=\frac{n_1+2n_2+1-2v}{2(1-v)}-\delta+\frac 1 2 (2n_1-h-1)$.

Let us compute $cost_o(H)$. Again, we can focus on the cost paid by the last player belonging to $P$ in $H$. She is using resources $B_{i n_1}$ for each $i\in [n_1-1]$ and resource $C$. The congestion of the latter is $n_1-h+n_2$. By exploiting the characterization of $n_{B_{ij}}(S)$ given above, we have that, of the $n_1-1$ resources of type $B$ used by the player, $h$ of them have congestion $2$ (since there are $h$ players in $P$ using the strategy of type $k$) and $n_1-h-1$ of them have congestion $1$ (since there cannot be resources with congestion equal to $0$). Thus, it holds $cost_o(H)=\frac 1 2 (n_1+h-1)+n_1-h+n_2$.

By the definition of $\Gamma$, each player in $P$ wants to minimize $(1-v)$ times her cost plus the sum of the costs of all the players in the game multiplied by $v$. Thus, we get
\begin{displaymath}
\begin{array}{cl}
& \widehat{cost}_k(H)\\
= & (1-v)cost_k(H)+v\left((h-1)cost_k(H)+(n_1-h)cost_o(H)+n_2(n_1-h+n_2)\right)
\end{array}
\end{displaymath}
and
\begin{displaymath}
\begin{array}{cl}
& \widehat{cost}_o(H)\\
= & (1-v)cost_o(H)+v\left(h\cdot cost_k(H)+(n_1-h-1)cost_o(H)+n_2(n_1-h+n_2)\right).
\end{array}
\end{displaymath}

Let $H'$ be the strategy profile obtained from $H$ when a player $i\in P$ changes her strategy from $k_i$ to $o_i$, i.e., the profile in which the number of players in $P$ using the strategy of type $k$ is $h-1$. Note that it holds $\widehat{cost}_k(H)<\widehat{cost}_o(H')$. Thus, in each strategy profile $H\neq K$, there always exists a player using a strategy of type $o$ who can improve by choosing the strategy of type $k$. This shows that $K$ is the only pure Nash equilibrium for $({\cal G},\Gamma)$.

Let us now compare ${\sf SUM}(K)$ with ${\sf SUM}(O)$, where $O=\left((o_i)_{i\in P},(s)_{i\in P'}\right)$. To this aim, note that each resource of type $A$ has congestion $1$ in $K$ and $0$ in $O$, each resource of type $B$ has congestion $1$ both in $K$ and $O$ and the resource of type $C$ has congestion $n_2$ in $K$ and $n_1+n_2$ in $O$. Thus, we obtain
$${\sf PoS}({\cal G}',\Gamma') = \frac{\left(\frac{n_1+2n_2+1-2v}{2(1-v)}-\delta\right)n+\frac 1 2 n_1(n_1-1)+n_2^2}{\frac 1 2 n_1(n_1-1)+(n_1+n_2)^2}.$$

By choosing $n_1=(1+\sqrt{3})n_2$ and $n_2$ sufficiently big, we get ${\sf PoS}({\cal G}',\Gamma')\geq \frac{3-\sqrt{3}-2v(2-\sqrt{3})}{2(1-v)}-\epsilon$.

\

\noindent{\bf Proof of Theorem \ref{ubpoaspecial}.}
For $\overline{v}\in [1/2,1]$, consider the dual solution such that $\theta=\frac{2-\underline{v}}{1-\overline{v}}$ and $x_i=\frac{1}{1-\overline{v}}$ for each $i\in [n]$. (Note that, for $\overline{v}=1$, $x_i$ and $\theta$ are not correctly defined. In fact, in such a case, the price of anarchy is unbounded which implies that the dual program is unfeasible). With these values, for each $e\in E$, the dual constraint becomes
$$\underline{v}O_e(O_e-1)+\overline{v}K_e(1-K_e)+O_e(K_e-2O_e+1)\leq 0.$$
For $O_e=0$, such an inequality becomes $\overline{v}K_e(K_e-1)\geq 0$ which is always verified for any non negative integer $K_e$ when $\overline{v}\geq 0$, while, for $O_e=1$, it becomes $\overline{v}K_e(K_e-2)-K_e+1\geq 0$ which is always verified for any non negative integer $K_e$ when $\overline{v}\geq\frac 1 2$.

For $O_e\geq 2$, the discriminant of the equation associated with the dual constrained, when solved for $K_e$, is $$4\underline{v}\overline{v}O_e(O_e-1)+\overline{v}^2+2\overline{v}O_e(3-4O_e)+O_e^2$$ which is non-positive when it holds
$$\overline{v}^2(4O_e^2-4O_e+1)+2\overline{v}O_e(3-4O_e)+O_e^2\leq 0.$$
Such an inequality is verified for any $$\overline{v}\in\left[\frac{O_e\left(4O_e-3-2\sqrt{3O_e^2-5O_e+2}\right)}{4O_e^2-4O_e+1},
\frac{O_e\left(4O_e-3+2\sqrt{3O_e^2-5O_e+2}\right)}{4O_e^2-4O_e+1}\right].$$
Since, for any $O_e\geq 2$, such an interval is contained in the interval $\left[\frac 2 9,1+\frac{\sqrt{3}}{2}\right]$, the proposed dual solution is feasible.

For $\overline{v}\in [0,1/2]$, consider the dual solution such that $\theta=\frac{5+2\overline{v}-3\underline{v}}{2-\overline{v}}$ and $x_i=\frac{3}{2-\overline{v}}$ for each $i\in [n]$. With these values, for each $e\in E$, the dual constraint becomes
$$3\underline{v}O_e(O_e-1)-\overline{v}(K_e^2-3K_e+2O_e^2)-K_e^2+3K_eO_e-O_e(5O_e-3)\leq 0.$$
For $O_e=0$, such an inequality becomes $K_e^2+\overline{v}K_e(K_e-3)\geq 0$ which is always verified for any non negative integer $K_e$ when $\overline{v}\leq \frac 1 2$, while, for $O_e=1$, it becomes $(1+\overline{v})(K_e^2-3K_e+2)\geq 0$ which is always verified for any non negative integer $K_e$ when $\overline{v}\geq 0$.

For $O_e\geq 2$, the discriminant of the equation associated with the dual constrained, when solved for $K_e$, is $$12\underline{v}(1+\overline{v})O_e(O_e-1)+\overline{v}^2(9-8O_e^2)+2\overline{v}O_e(15-14O_e)-O_e(11O_e-12)$$ which is non-positive when it holds
$$\overline{v}^2(4O_e^2-12O_e+9)+2\overline{v}O_e(9-8O_e)-O_e(11O_e-12)\leq 0.$$
For $O_e\geq 2$, the quantity $-O_e(11O_e-12)$ is always negative, hence, in order to show the validity of the above inequality, we only need to prove that $$\overline{v}^2(4O_e^2-12O_e+9)+2\overline{v}O_e(9-8O_e)\leq 0.$$ Such an inequality is always verified when $\overline{v}\leq\frac{16O_e^2-18O_e}{4O_e^2-12O_e+9}$. Since, for $O_e\geq 2$, the right-hand side of this inequality is lower bounded by $4$, the proposed dual solution is feasible.

\end{document}